\theoremstyle{plain}
\theoremstyle{definition}
\theoremstyle{remark}
\icmltitlerunning{PharmacophoreBridge}
\begin{document}

\twocolumn[
\icmltitle{Pharmacophore-guided de novo drug design with diffusion bridge}



\icmlsetsymbol{equal}{*}

\begin{icmlauthorlist}
\icmlauthor{Conghao Wang}{ccds}
\icmlauthor{Jagath C. Rajapakse}{ccds}
\end{icmlauthorlist}

\icmlaffiliation{ccds}{College of Computing and Data Science, Nanyang Technological University, Singapore}

\icmlcorrespondingauthor{Jagath C. Rajapakse}{ASJagath@ntu.edu.sg}

\icmlkeywords{Diffusion bridge \and Geometric learning \and Drug design \and Generative model}

\vskip 0.3in
]



\printAffiliations{}  

\begin{abstract}
\textit{De novo} design of bioactive drug molecules with potential to treat desired biological targets is a profound task in the drug discovery process. Existing approaches tend to leverage the pocket structure of the target protein to condition the molecule generation. However, even the pocket area of the target protein may contain redundant information since not all atoms in the pocket is responsible for the interaction with the ligand. In this work, we propose PharmacoBridge, a phamacophore-guided \textit{de novo} design approach to generate drug candidates inducing desired bioactivity via diffusion bridge. Our method adapts the diffusion bridge to effectively convert pharmacophore arrangements in the spatial space into molecular structures under the manner of $SE(3)$-equivariant transformation, providing sophisticated control over optimal biochemical feature arrangements on the generated molecules. PharmacoBridge is demonstrated to generate hit candidates that exhibit high binding affinity with potential protein targets. 
\end{abstract}

\section{Introduction}
\label{submission}

Computer-aided drug design (CADD) plays a crucial role in the modern drug discovery procedure. Conventional CADD approaches, such as virtual screening, require searching for candidates with optimal molecular properties in a vast chemistry library. Although accelerated by the high-throughput technology, this process is still time-consuming and costly \cite{wouters2020estimated} since the relationship between chemical structures and the molecular property of interest is obscure. 
\textit{De novo} design, on the other hand, models the chemical space of molecular structures and properties and seeks for the optimal candidates in a directed manner \cite{meyers2021novo} instead of enumerating every possibility, thus facilitating the drug discovery process. Moreover, the flourishing of deep generative models in various domains, such as large language models and image synthesis, has endowed us an opportunity of applying deep learning to improving \textit{de novo} drug design algorithms. 

Generative models including variational autoencoder (VAE) \cite{kingma2013auto}, generative adversarial networks (GAN) \cite{goodfellow2020generative} and denoising diffusion probabilistic models (DDPM) \cite{ho2020denoising}, have been successfully adapted for molecular design. Initially, researchers tend to represent drugs with linear notations, such as Simplified Molecular-Input Line-Entry System (SMILES) \cite{weininger1988smiles}, due to its simplicity. Then long-short term memory (LSTM) networks were readily applied to encoding the SMILES notations, and VAE and GAN algorithms were adapted for generation \cite{prykhodko2019novo, mendez2020novo, born2021paccmannrl, kaitoh2021triomphe}. Such methods, however, suffered from low chemistry validity rates of generated molecules since the structural information is neglected in SMILES notations. 

To alleviate such limitations, a wide range of molecular graph-based generative algorithms has emerged. For example, graph VAE \cite{simonovsky2018graphvae} paved the way for probabilistic graph construction approaches, which can be used as the generator of other generative models such as GANs \cite{de2018molgan}. However, due to computational complexity, such methods were only able to generate a limited number of nodes. A couple of research managed to address this issue by proposing auto-regressive graph construction algorithms. Flow-based methods such as MoFlow \cite{zang2020moflow} and GraphAF \cite{shi2020graphaf} modeled the generation of bonds and atoms in a sequential decision flow and generated them in order. MoLeR \cite{maziarz2021learning} and MGSSL \cite{zhang2021motif} introduced motif-based generation that considers functional groups rather than single atoms as nodes, thus expanding the scale of generated molecules. Auto-regressive generation also improved the validity rate of generated molecules since the valence check can be performed at every step of generation. However, such methods are unnatural since there is no such decision sequence in a molecule, where posterior nodes depend on former generated ones. In addition, if one step in the middle of the generation is predicted improperly, the whole subsequent generation will be affected. 

The advent of diffusion models \cite{song2020score} enables molecular graph generation in one-shot. Although diffusion models initially obtained tremendous success in text and image generation \cite{rombach2022high}, a wealth of research has demonstrated that they can be adapted for graph generation as well since they can be developed to learn the distributions of the adjacency matrix and the feature matrix, which fully define the graph \cite{vignac2022digress, jo2022score, lin2024functional}. In the meantime, researchers realized the power of diffusion models is not restricted on 2D graph generation and can be evolved into 3D point cloud generation, which allows for more sophisticated design of molecules. In 2D graph generation, subsequent tests over all 3D conformers have to be conducted to filter the candidate since different isomers of a molecule may exhibit various pharmaceutical properties, which may lead to additional costs and future attrition. Aiming at achieving direct 3D spatial design of molecules, a series of studies were proposed \cite{xu2022geodiff, xu2023geometric, schneuing2022structure, hoogeboom2022equivariant, guan20233d}. Equivariant geometric learning were leveraged to ensure the roto-translational equivariance of their systems while generating coordinates of the nodes. 

Nevertheless, latest research indicates that the typical diffusion models, such as score matching with Langevin dynamics (SMLD) \cite{song2019generative} and DDPM \cite{ho2020denoising}, can be generalized as \textit{diffusion bridges} that are guaranteed to reach an endpoint from the desirable domain in pre-fixed time \cite{liu2022let, ye2022first} according to Doob's \textit{h}-transform \cite{doob1984classical}. Moreover, it is demonstrated that diffusion bridges not only can be applied to unconditional generation by mapping data distribution to the prior noise distribution but also can be leveraged to align two arbitrary distributions by fixing both the start point and the end point \cite{chen2021likelihood, vargas2021solving, wang2021deep}. Diffusion bridges have been successfully employed for image translation \cite{su2022dual, zhou2023denoising}, physics-informed molecule generation \cite{wu2022diffusion, jo2023graph}, molecular docking \cite{somnath2023aligned} and equilibrium state prediction \cite{luo2024bridging}. We believe that they are promising approaches to hit molecule design as well since the start point can be naturally considered as the molecular structures and the end point as desired conditions, e.g., pharmacophore arrangements. 

In the field of drug design, increasing research has delved into adapting generative models to the design of hit candidates with potential to react with particular biological targets. For example, gene expression profiles were used to condition the generation so that the generated molecules would lead to desirable biological activities \cite{mendez2020novo, born2021paccmannrl, kaitoh2021triomphe, wang2024gldm}. However, gene expression profiles may contain redundant information, since not all genes are related to the drug reaction. To enable more sophisticated control over the generation procedure, researchers proceeded to consider the target protein structure as the constraint. Pocket2Mol \cite{peng2022pocket2mol} first proposed an auto-regressive model to generate atoms and bonds gradually under the guidance of the protein pocket structure. Further research improved pocket-based generation by one-shot generation using diffusion models, such as TargetDiff \cite{guan20233d}. D3FG \cite{lin2024functional} extended atom-based generation to functional groups, aiming at preserving better chemical consistency.

Recent studies proposed to generate hit candidates guided by pharmacophores, which hypothesizes a spatial arrangement of chemical features that are essential for the binding between the drug molecule and the target protein. Unlike conditioning by the protein pocket structure, which depends on the model to learn to discover the indispensable features for binding implicitly, pharmacophore arrangements explicitly define such features, which renders them suitable constraints for hit identification. Nonetheless, existing research studying pharmacophore-guided drug discovery, e.g., PharmacoNet \cite{seo2023pharmaconet}, only leveraged pharmacophore modelling to compute matching scores between ligand and protein, and thereby accelerated virtual screening. Although PGMG \cite{zhu2023pharmacophore} introduced a \textit{de novo} pharmacophore-guided drug design framework, it only learned a latent variable of the pharmacophore features and generated SMILES representations accordingly. This actually destructs the 3D features given by the pharmacophore hypotheses and omits their connection with the 3D molecular structures. In this work, we propose a direct pharmacophore-guided \textit{de novo} drug design approach that learns to transform the pharmacophore distribution into the molecular structure distribution with an $SE(3)$-equivariant diffusion bridge. 

\section{Preliminaries}
\label{sec:bg}


\begin{figure*}[ht]
    \centering
    \includegraphics[width=0.9\linewidth]{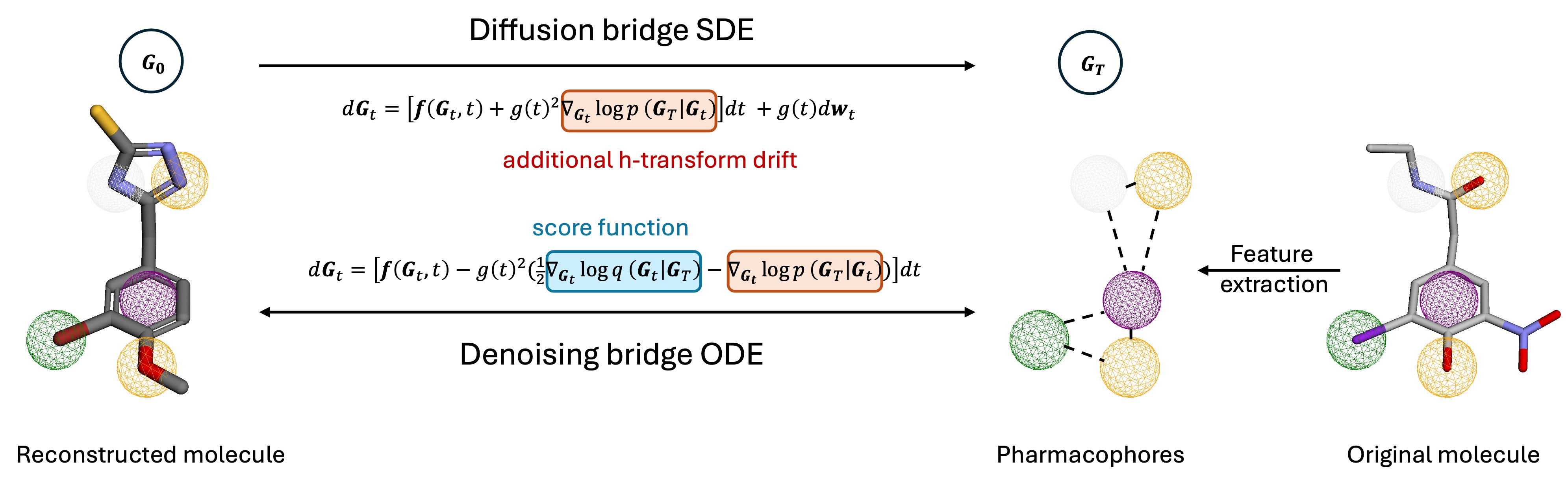}
    \caption{Overview of PharmacoBridge. The diffusion bridge process is devised to map the molecule data $\mathbf{G}_0$ to the extracted pharmacophore data $\mathbf{G}_T$ via Doob's $h$-transforms. Reversely, a score matching model is trained to estimate the score function, which composes the denoising bridge process that recovers molecule data from the pharmacophore data. }
    \label{fig:diagram}
\end{figure*}

\subsection{Notations and Problem Formulation}

Drug molecules and their pharmacophore features are represented as point clouds in the 3D Euclidean space, i.e., $\mathbf{G} = ( \mathbf{x}, \mathbf{h} )$, where $\mathbf{x} \in \mathbb{R}^{N \times 3}$ and $\mathbf{h} \in \mathbb{R}^{N \times M}$ denote atom coordinates and features, respectively. 
As shown in Figure \ref{fig:diagram}, our mission is to learn the evolution of the atom point cloud over time $\{\mathbf{G}_t\}_{t=0}^T$ via a bridge process $q( \mathbf{G}_t | \mathbf{G}_T )$ that allows us to sample molecular structures from pharmacophore hypotheses. 

Let the initial state $\mathbf{G}_0 \sim q_{data}(\mathbf{g})$ denote the molecular point cloud and the terminal state $\mathbf{G}_T \sim q_{data}(\mathbf{\Gamma})$ denote the point cloud of the corresponding pharmacophore features. Since molecules $\mathbf{g}$ and pharmacophores $\mathbf{\Gamma}$ are strictly paired data, rotations or translations to one of them will cause equivariant transformations to the other, i.e., $q_{data}(g \cdot \mathbf{g}, g \cdot \mathbf{\Gamma}) = q_{data}(\mathbf{g}, \mathbf{\Gamma})$ $\forall g \in SE(3)$, which indicates an $SE(3)$-invariant joint distribution. Consequently, the evolution of the system should satisfy $q(g \cdot \mathbf{G}_0 | g \cdot \mathbf{G}_T) = q(\mathbf{G}_0 | \mathbf{G}_T)$ $\forall g \in SE(3)$, indicating an $SE(3)$-equivariant transition density. 

\subsection{Diffusion Model with SDEs}

A diffusion process is to inject noise into data $\mathbf{G}_0 \sim q_{data}$ and gradually convert data into noise $\mathbf{G}_T \sim p_{prior}$, where $q_{data}$ and the $p_{prior}$ denote the data and the prior distribution, respectively. Song et al. proposed to model this diffusion process as the solution to the following SDE \cite{song2020score}
\begin{equation}
    \label{eq:diff_proc}
    d\mathbf{G}_t = \mathbf{f}(\mathbf{G}_t, t)dt + g(t)d\mathbf{w}_t
\end{equation}
where $\mathbf{f}(\cdot, t)$ is the drift function, $\mathbf{w_t}$ is the standard Wiener process and $g(\cdot)$ is the diffusion coefficient. In our case, we denote the representation of a molecule at time $t$ as $\mathbf{G}_t$. 

The denoising process is the solution to the reverse-time SDE
\begin{equation}
    \label{eq:denoi_proc}
    d\mathbf{G}_t = [\mathbf{f}(\mathbf{G}_t, t) - g(t)^2 \nabla_{\mathbf{G}_t}\log p(\mathbf{G}_t)]dt + g(t)d\mathbf{w}_t
\end{equation}
where $p(\mathbf{G}_t)$ is the marginal distribution of $\mathbf{G}_t$ at time $t$. SMLD \cite{song2019generative} and DDPM \cite{ho2020denoising} can be regarded as such SDE frameworks with different designs of $\mathbf{f}(\cdot, t)$ and $g(\cdot)$.

\subsection{Diffusion Bridge}

Doob's $h$-transform \cite{doob1984classical} reveals that by adding an additional drift term, the diffusion process is guaranteed to reach a fixed end point from an arbitrary distribution, given by
\begin{equation}
    \label{eq:diff_bridge}
    d\mathbf{G}_t = [\mathbf{f}(\mathbf{G}_t, t) + g(t)^2 \nabla_{\mathbf{G}_t}\log p(\mathbf{G}_T | \mathbf{G}_t)]dt + g(t)d\mathbf{w}_t
\end{equation}
where $p(\mathbf{G}_T | \mathbf{G}_t)$ is the transition density function of the original diffusion process solving \eqref{eq:diff_proc}. Starting from $\mathbf{G}_0 \sim q_{data}(\mathbf{g})$, the score function $\nabla_{\mathbf{G}_t}\log p(\mathbf{G}_T | \mathbf{G}_t)$ drives the process to reach the desired endpoint $\mathbf{G}_T = \mathbf{\Gamma} \sim q_{data}(\mathbf{\Gamma})$. 


\section{Pharmacophore to Drug Diffusion Bridge}
\label{sec:mtd}

We introduce the pharmacophore to drug diffusion bridge model (PharmacoBridge) to translate 3D arrangements of pharmacophore features into hit molecules. The proposed framework is presented in Figure \ref{fig:diagram}. 

\subsection{Equivariant Denoising Diffusion Bridge}

We present a graph translation paradigm that generates 3D point cloud guided by prior cluster arrangements. We demonstrate our approach on the task of 3D hit molecule design guided by the pharmacophore hypothesis. Based on the diffusion bridge characterized by Eq. \eqref{eq:diff_bridge}, the time-reversed SDE of the denoising bridge is defined in the following theorem.
\begin{restatable}[Denoising diffusion bridge]{theorem}{denoisingbridge}
    \label{thm:denoisingbridge}
    By reversing the $h$-transformed diffusion bridge with the law of \eqref{eq:diff_bridge}, we obtain the following ODE to model the time evolution of the transition density $q(\mathbf{G}_t | \mathbf{G}_T)$:
    \begin{equation}
        \label{eq:denoi_bridge}
        \begin{aligned}
            d\mathbf{G}_t = [ &\mathbf{f}(\mathbf{G}_t, t) - g^2(t) (\frac{1}{2} \nabla_{\mathbf{G}_t}\log q(\mathbf{G}_t | \mathbf{G}_T) \\
            &- \nabla_{\mathbf{G}_t}\log p(\mathbf{G}_T | \mathbf{G}_t))]dt 
        \end{aligned}
    \end{equation}
    which constructs the denoising bridge process $\{\mathbf{G}_t\}_{t=0}^T$ with marginal distribution $\mathbf{G}_T = \mathbf{\Gamma} \sim q_{data}(\mathbf{\Gamma})$, inducing the joint distribution $q(\mathbf{G}_0, \mathbf{G}_T)$ approximating $q_{data}(\mathbf{g}, \mathbf{\Gamma})$. $p(\mathbf{G}_T | \mathbf{G}_t)$ is the transition density associated with the original diffusion process specified by Eq. \eqref{eq:diff_proc}. 
\end{restatable}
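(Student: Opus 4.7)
The plan is to view Eq.~\eqref{eq:denoi_bridge} as the probability-flow ODE associated with the time-reversal of the forward bridge SDE \eqref{eq:diff_bridge}, and to derive it in two standard steps: apply Anderson's reverse-time formula to \eqref{eq:diff_bridge}, then trade the Brownian-motion term for a deterministic flow via the Fokker--Planck equation. The absence of a $d\mathbf{w}_t$ term on the right of \eqref{eq:denoi_bridge} is precisely what signals that an ODE formulation is intended, and the factor of $\tfrac12$ in front of the $\nabla_{\mathbf{G}_t}\log q$ term is the tell-tale signature of the SDE-to-ODE conversion.

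First I abbreviate the forward bridge drift as $\mathbf{b}(\mathbf{G}_t,t) := \mathbf{f}(\mathbf{G}_t,t) + g(t)^2\nabla_{\mathbf{G}_t}\log p(\mathbf{G}_T\mid\mathbf{G}_t)$, so that \eqref{eq:diff_bridge} reads $d\mathbf{G}_t = \mathbf{b}(\mathbf{G}_t,t)\,dt + g(t)\,d\mathbf{w}_t$. Let $q(\mathbf{G}_t\mid\mathbf{G}_T)$ denote the conditional marginal of this bridge obtained by propagating $\mathbf{G}_0 \sim q_{data}(\mathbf{g})$ through the $h$-transformed dynamics. Anderson's theorem then yields the reverse-time SDE
\begin{equation*}
d\mathbf{G}_t = \bigl[\mathbf{b}(\mathbf{G}_t,t) - g(t)^2\nabla_{\mathbf{G}_t}\log q(\mathbf{G}_t\mid\mathbf{G}_T)\bigr]dt + g(t)\,d\bar{\mathbf{w}}_t,
\end{equation*}
whose one-time marginals coincide with those of the forward bridge.

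Next I invoke the standard equivalence between an It\^o SDE with diffusion coefficient $g(t)$ and marginal density $q_t$, and the deterministic flow obtained by subtracting $\tfrac12 g(t)^2\nabla\log q_t$ from the drift --- both share the same Fokker--Planck equation and therefore the same one-time marginals. Applying this to the reverse-time SDE above eliminates $d\bar{\mathbf{w}}_t$ and produces the drift $\mathbf{b}(\mathbf{G}_t,t) - \tfrac12 g(t)^2\nabla_{\mathbf{G}_t}\log q(\mathbf{G}_t\mid\mathbf{G}_T)$. Substituting the definition of $\mathbf{b}$ and collecting the two score terms inside a single bracket reproduces Eq.~\eqref{eq:denoi_bridge} exactly. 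The claim about the induced joint distribution $q(\mathbf{G}_0,\mathbf{G}_T)$ then follows from marginal preservation at both endpoints: the terminal $\mathbf{G}_T \sim q_{data}(\mathbf{\Gamma})$ is enforced by the $h$-transform, while the initial $\mathbf{G}_0 \sim q_{data}(\mathbf{g})$ is preserved because neither the Anderson reversal nor the flow-ODE step alters one-time marginals.

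The main obstacle will be handling the endpoint singularity: both $\nabla_{\mathbf{G}_t}\log p(\mathbf{G}_T\mid\mathbf{G}_t)$ and $\nabla_{\mathbf{G}_t}\log q(\mathbf{G}_t\mid\mathbf{G}_T)$ blow up as $t \to T$, so Anderson's regularity hypotheses and the Fokker--Planck manipulation must be verified on $[0,T-\varepsilon]$ and then extended in the limit $\varepsilon \to 0$. A secondary subtlety is justifying that $q(\mathbf{G}_t\mid\mathbf{G}_T)$ satisfies a genuine forward Kolmogorov equation with drift $\mathbf{b}$ in the first place, which can be established either by differentiating the joint density of $(\mathbf{G}_t,\mathbf{G}_T)$ with respect to $t$ or by appealing to the standard representation of Doob-conditioned bridges; once this is granted, every subsequent step is a routine algebraic rearrangement.
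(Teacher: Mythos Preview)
Your proposal is correct and reaches Eq.~\eqref{eq:denoi_bridge} by essentially the same mechanism as the paper, though the organization differs. The paper stays entirely at the Fokker--Planck level: it writes the forward Kolmogorov equation for $p(\mathbf{G}_t \mid \mathbf{G}_0, \mathbf{G}_T)$ under the $h$-transformed drift, takes the expectation over $\mathbf{G}_0 \sim q_{data}(\mathbf{G}_0 \mid \mathbf{G}_T)$ (which commutes with the differential operators precisely because the bridge drift $\mathbf{b}$ does not depend on $\mathbf{G}_0$---this is exactly your ``secondary subtlety,'' and the paper handles it by explicit marginalization rather than by appeal), and then rewrites the resulting Fokker--Planck equation for $q(\mathbf{G}_t \mid \mathbf{G}_T)$ as a Liouville/continuity equation by absorbing the second-order term into the drift via $\tfrac12 g^2 \nabla \log q$. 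That yields the probability-flow ODE directly, with no Anderson reversal in between. Your route---forward bridge SDE $\to$ Anderson reverse SDE $\to$ probability-flow ODE---is equivalent but inserts an intermediate step that is not needed, since the forward Fokker--Planck equation already determines the flow ODE without ever writing down a reverse-time SDE. The paper's purely PDE manipulation is marginally cleaner and avoids invoking Anderson's regularity hypotheses near the singular endpoint; otherwise the two arguments coincide.
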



The proof of deriving the denoising bridge process specified in \cref{thm:denoisingbridge} is provided in Appendix \ref{sec:denoi_bridge_proof}. Next, we aim to devise an $SE(3)$-equivariant diffusion bridge that satisfies symmetry constraints. As demonstrated in \cite{luo2024bridging}, given a diffusion process solving SDE \eqref{eq:diff_proc}, its associated transition density $p(\mathbf{G}_T | \mathbf{G}_t)$ is $SE(3)$-equivariant, i.e., $p(g \cdot \mathbf{G}_T | g \cdot \mathbf{G}_t) = p(\mathbf{G}_T | \mathbf{G}_t)$, $\forall g \in SE(3)$,
if (1) the initial density $q(\mathbf{G}_0)$ is $SE(3)$-invariant; and (2) $\mathbf{f}(\cdot, t)$ is $SO(3)$-equivariant and $T(3)$-invariant; (3) the transition density of $\mathbf{w}_t$ is $SE(3)$-equivariant. Based on this equivariant diffusion process, we further devise the equivariant denoising diffusion bridge:
\begin{restatable}[Equivariant denoising diffusion bridge]{theorem}{eddb}
    \label{thm:eddb}
    
    Given an $SE(3)$-equivariant diffusion process with transition density $p(\mathbf{G}_T | \mathbf{G}_t)$, its $h$-transformed diffusion bridge with the law of \eqref{eq:diff_bridge} and the corresponding denoising diffusion bridge characterized in \cref{thm:denoisingbridge} also preserve $SE(3)$-equivariance. Their transition densities satisfy
    \begin{itemize}
        \item $p(g \cdot \mathbf{G}_t | g \cdot \mathbf{G}_0, g \cdot \mathbf{G}_T) = p(\mathbf{G}_t | \mathbf{G}_0, \mathbf{G}_T)$, $\forall g \in SE(3)$
        \item $q(g \cdot \mathbf{G}_t | g \cdot \mathbf{G}_T) = q(\mathbf{G}_t | \mathbf{G}_T)$, $\forall g \in SE(3)$
    \end{itemize}
\end{restatable}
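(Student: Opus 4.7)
}
The plan is to reduce both equivariance claims to the already-established $SE(3)$-equivariance of the \emph{forward} transition density $p(\mathbf{G}_T\mid\mathbf{G}_t)$ (the property of \cite{luo2024bridging} invoked just above the theorem statement), avoiding any direct manipulation of the bridge SDE. The key observations are that (i) rotations and translations in $SE(3)$ preserve the Lebesgue measure, so changes of variables $\mathbf{G}_0\mapsto g\cdot\mathbf{G}_0$ introduce no Jacobian, and (ii) conditional densities obtained by Bayes' rule or by marginalisation inherit any equivariance shared by their constituents.

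For the first bullet, I would apply Bayes' rule to the pinned bridge density induced by \eqref{eq:diff_bridge}, which by construction coincides with the forward process conditioned on its endpoint:
\begin{equation*}
p(\mathbf{G}_t\mid\mathbf{G}_0,\mathbf{G}_T)\;=\;\frac{p(\mathbf{G}_T\mid\mathbf{G}_t)\,p(\mathbf{G}_t\mid\mathbf{G}_0)}{p(\mathbf{G}_T\mid\mathbf{G}_0)}.
\end{equation*}
Each of the three factors on the right is a transition density of the underlying $SE(3)$-equivariant diffusion \eqref{eq:diff_proc}, so each satisfies $p(g\cdot\mathbf{a}\mid g\cdot\mathbf{b})=p(\mathbf{a}\mid\mathbf{b})$ for all $g\in SE(3)$. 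Substituting $g\cdot\mathbf{G}_0, g\cdot\mathbf{G}_t, g\cdot\mathbf{G}_T$ into the numerator and denominator, the group actions cancel factor by factor, yielding the desired identity $p(g\cdot\mathbf{G}_t\mid g\cdot\mathbf{G}_0, g\cdot\mathbf{G}_T)=p(\mathbf{G}_t\mid\mathbf{G}_0,\mathbf{G}_T)$.

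For the second bullet, I would express the denoising-bridge marginal as
\begin{equation*}
q(\mathbf{G}_t\mid\mathbf{G}_T)\;=\;\int p(\mathbf{G}_t\mid\mathbf{G}_0,\mathbf{G}_T)\,q_{data}(\mathbf{G}_0\mid\mathbf{G}_T)\,d\mathbf{G}_0,
\end{equation*}
where $q_{data}(\mathbf{G}_0\mid\mathbf{G}_T)$ is $SE(3)$-equivariant because the joint data distribution $q_{data}(\mathbf{g},\mathbf{\Gamma})$ is $SE(3)$-invariant by assumption. Substituting $g\cdot\mathbf{G}_t$ and $g\cdot\mathbf{G}_T$, changing the dummy variable to $\mathbf{G}_0':=g\cdot\mathbf{G}_0$ (which preserves the volume element), and applying the first bullet together with the equivariance of $q_{data}(\cdot\mid\cdot)$, both factors become the untransformed ones, establishing $q(g\cdot\mathbf{G}_t\mid g\cdot\mathbf{G}_T)=q(\mathbf{G}_t\mid\mathbf{G}_T)$.

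The main obstacle I anticipate is a bookkeeping one rather than a conceptual one: one must justify that the bridge process built by Doob's $h$-transform is indeed the forward diffusion pinned at $\mathbf{G}_T$, so that its pinned transition density really is $p(\mathbf{G}_t\mid\mathbf{G}_0,\mathbf{G}_T)$ above. A careful alternative, which I would include as a sanity check, is to apply $g$ directly to the bridge SDE~\eqref{eq:diff_bridge} and to the denoising SDE~\eqref{eq:denoi_bridge}, and verify term-by-term that the drift remains $SO(3)$-equivariant and $T(3)$-invariant (the score $\nabla_{\mathbf{G}_t}\log p(\mathbf{G}_T\mid\mathbf{G}_t)$ of an $SE(3)$-equivariant density transforms as a vector field, so its rotation cancels against the Jacobian coming from $\nabla_{g\cdot\mathbf{G}_t}$) and that the driving Brownian motion is preserved in law; uniqueness of the weak solution then yields the same equivariance conclusion.
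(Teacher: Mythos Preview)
Your proposal is correct and follows essentially the same route as the paper's proof: Bayes' rule for the pinned density in the first bullet, then the mixture-integral representation plus a volume-preserving change of variables $\mathbf{G}_0\mapsto g\cdot\mathbf{G}_0$ for the second. The only difference is cosmetic---where you invoke the cited equivariance of the forward transition kernel directly for $p(\mathbf{G}_t\mid\mathbf{G}_0)$, the paper re-derives that particular factor from scratch by applying the transformation $\mathbf{y}=\mathbf{R}\mathbf{x}+\mathbf{t}$ to the Fokker-Planck equation and using $\sum_i R_{ij}R_{ik}=\delta_{jk}$ together with the equivariant Dirac initial condition.
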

    
The constructed equivariant denoising diffusion bridge enables us to map molecule and pharmacophore distributions retaining symmetry constraints. Proofs of \cref{thm:eddb} are provided in Appendix \ref{sec:eddb_proof}. 

\paragraph{Denoising score matching}


The training objective is designed to estimate the score $\nabla_{\mathbf{G}_t}\log q(\mathbf{G}_t | \mathbf{G}_T)$ in order to solve the denoising bridge ODE \eqref{eq:denoi_bridge}
\begin{equation}
    \label{eq:training_obj}
    \begin{aligned}
        \mathcal{L}_{\theta} = &\mathbb{E}_{(\mathbf{G}_0, \mathbf{G}_T) \sim q_{data}(\mathbf{g}, \mathbf{\Gamma}), \mathbf{G}_t \sim q(\mathbf{G}_t | \mathbf{G}_0, \mathbf{G}_T), t \sim \mathcal{U}(0, T)} \\
        &\left[\lambda(t) \left \| s_{\theta}(\mathbf{G}_t, \mathbf{G}_T, t) - \nabla_{\mathbf{G}_t}\log q(\mathbf{G}_t | \mathbf{G}_0, \mathbf{G}_T) \right \|^2 \right] 
    \end{aligned}
\end{equation}
where the initial and final states are fixed at $\mathbf{g}$ and $\mathbf{\Gamma}$, respectively.
$\lambda(t)$ is a time-dependent loss weight to intensify the penalty as the denoising bridge approaches the data point, improving the reconstruction at later stages. Details of the choice of this term are discussed in Appendix \ref{sec:scaling}. 

The true score of the transition density $q(\mathbf{G}_t | \mathbf{G}_T) = \int_{\mathbf{G}_0} q(\mathbf{G}_t | \mathbf{G}_0, \mathbf{G}_T) q_{data}(\mathbf{G}_0 | \mathbf{G}_T) d\mathbf{G}_0$ is intractable in general. We therefore train a neural network model $s_{\theta}(\mathbf{G}_t, \mathbf{G}_T, t)$ to approximate the score of the sampling distribution $\nabla_{\mathbf{G}_t}\log q(\mathbf{G}_t | \mathbf{G}_0, \mathbf{G}_T)$. 
Specific parameterization strategies are discussed in the next section. 
DDBM \cite{zhou2023denoising} has shown that the optimal solution of \eqref{eq:training_obj} approximates the true score $s_{\theta}^*(\mathbf{G}_t, \mathbf{G}_T, t) = \nabla_{\mathbf{G}_t}\log q(\mathbf{G}_t | \mathbf{G}_T)$.

\subsection{Score Model Parameterization}

To parameterize the score model $s_{\theta}(\mathbf{G}_t, \mathbf{G}_T, t)$, we design the sampling distribution $q(\mathbf{G}_t | \mathbf{G}_0, \mathbf{G}_T)$ to be the same as the diffusion distribution fixed on both ends, which is a Gaussian distribution known in closed-form
\begin{equation}
    \label{eq:q_dist}
    \begin{aligned}
        q(\mathbf{G}_t | \mathbf{G}_0, \mathbf{G}_T) &:= p(\mathbf{G}_t | \mathbf{G}_0, \mathbf{G}_T) = \mathcal{N}(\hat{\mu}_t, \hat{\sigma}_t^2 \mathbf{I}), \\
        \mathrm{where} \quad \hat{\mu}_t & = \frac{\text{SNR}_T}{\text{SNR}_t} \frac{\alpha_t}{\alpha_T} \mathbf{G}_T + \alpha_t \mathbf{G}_0 (1 - \frac{\text{SNR}_T}{\text{SNR}_t}), \\
        \mathrm{and} \quad \hat{\sigma}_t^2 &= \sigma_t^2 (1 - \frac{\text{SNR}_T}{\text{SNR}_t})
    \end{aligned}
\end{equation}
where $\alpha_t$ and $\sigma_t$ are the signal and noise schedules, which vary with the choice of the bridge design. The signal-to-noise ratio is thus defined as $\text{SNR}_t = \alpha_t^2 / \sigma_t^2$.

Since denoising bridge distribution is designed to be Gaussian, its score function can be readily derived by 
\begin{equation}
    \label{eq: score_func}
    \nabla_{\mathbf{G}_t}\log q(\mathbf{G}_t | \mathbf{G}_0, \mathbf{G}_T) = - \frac{\mathbf{G}_t - \hat{\mu}_t}{\hat{\sigma}_t^2}
\end{equation}
We train $s_{\theta}(\mathbf{G}_t, \mathbf{G}_T, t)$ to estimate the score above, which can be reparameterized as
\begin{equation}
    \label{eq:score_model}
    \begin{aligned}
    &s_{\theta}(\mathbf{G}_t, \mathbf{G}_T, t) = -\frac{\mathbf{G}_t - \mu_{\theta}}{\hat{\sigma}_t^2}, \\
    \mathrm{where} \quad &\mu_{\theta} = ( \frac{\text{SNR}_T}{\text{SNR}_t} \frac{\alpha_t}{\alpha_T} \mathbf{G}_T + \alpha_t D_{\theta}(\mathbf{G}_t, \mathbf{G}_T, t) (1 - \frac{\text{SNR}_T}{\text{SNR}_t}) ) \\
    \end{aligned}
\end{equation}
where the denoiser $D_{\theta}(\mathbf{G}_t, \mathbf{G}_T, t)$ aims at predicting $\mathbf{G}_0$. Following EDM \cite{karras2022elucidating}, $D_{\theta}$ can be further reparameterized with 
\begin{equation}
    \label{eq:edm}
    D_{\theta}(\mathbf{G}_t, \mathbf{G}_T, t) = c_{skip}(t) \mathbf{G}_t + c_{out}(t) F_{\theta} (c_{in}(t) \mathbf{G}_t, c_{noise}(t))
\end{equation}
where $F_{\theta}(\cdot, t)$ is the neural network that predicts something between $\mathbf{G}_0$ and the noise at time step $t$. In typical diffusion models \cite{song2020score, song2019generative, ho2020denoising}, $F_{\theta}(\cdot, t)$ aims at directly predicting the noise at time step $t$, but this is demanding when $t$ approaches the final step $T$, since the data is almost full of noise. EDM proposed to adapt $F_{\theta}$ to predict the mixture of signal and noise to alleviate this drawback \cite{karras2022elucidating}. Specific parameterizations for VP- and VE-based diffusion bridges are illustrated in Appendix \ref{sec:bridge_para}. The choice of the scaling parameters $c_{in}$, $c_{out}$, $c_{skip}$ and $c_{noise}$ is discussed in Appendix \ref{sec:scaling}. 

To preserve $SE(3)$-equivariance, we implement $F_{\theta}(\cdot, t)$ with EGNN \cite{satorras2021n}. The behavior of $F_{\theta}$ at step $t$ is formulated as
\begin{equation}
    \label{eq:f_theta}
    \begin{aligned}
        (\Tilde{\mathbf{x}}, \Tilde{\mathbf{h}}) &= F_{\theta} (c_{in}(t) \mathbf{G}_t, c_{noise}(t)) \\
        &= F_{\theta} (c_{in}(t) (\mathbf{x}_{t}, \mathbf{h}_{t}), c_{noise}(t))
    \end{aligned}
\end{equation}
where $\mathbf{x}$ and $\mathbf{h}$ denote the atom positions and features, respectively. The output $\mathbf{\Tilde{x}}$ and $\mathbf{\Tilde{h}}$ represent the mixture of noise and signal. The reconstructed initial data is calculated via $\hat{\mathbf{x}}_0 = c_{skip}(t) \mathbf{x}_t + c_{out}(t) \Tilde{\mathbf{x}}$ and $\hat{\mathbf{h}}_0 = c_{skip}(t) \mathbf{h}_t + c_{out}(t) \Tilde{\mathbf{h}}$ according to Eq. \eqref{eq:edm}. Implementation details of $F_{\theta}$ are provided in Appendix \ref{sec:se3_model}. 

\paragraph{Training objective}
Practically, the loss function in Eq. \eqref{eq:training_obj} is equivalent to
\begin{equation}
    \label{eq:new_obj}
    \mathcal{L}_{\theta} = \mathbb{E}_{\mathbf{x}_t, \mathbf{x}_0, \mathbf{h}_t, \mathbf{h}_0, t} (\| \hat{\mathbf{x}}_0 - \mathbf{x}_0 \|^2 + \omega_h \| \hat{\mathbf{h}}_0 - \mathbf{h}_0 \|^2)
\end{equation}
which is the combination of the MSE losses with regard to $\mathbf{x}$ and $\mathbf{h}$. The weight of the feature loss $\omega_h$ is empirically configured as 10 to keep the training stable. 


\subsection{Sampling with PharmacoBridge}

\begin{algorithm*}[ht]
    \caption{Deterministic sampler of PharmacoBridge}\label{alg:sampler}
    \begin{algorithmic}
        \State \textbf{Input: } sampled time steps $t_{i \in {N, ..., 0}}$, pharmacophore data $\mathbf{G}_N \sim p_{pp}$, score matching model $\mathbf{s}_{\theta}$
        \For{$i=N:1$} 
            \State $\mathbf{d}_i \leftarrow - \mathbf{f}(\mathbf{G}_i, t_i) + g^2(t_i) ( \frac{1}{2} \mathbf{s}_{\theta}(\mathbf{G}_i, \mathbf{G}_N, T) - \nabla_{\mathbf{G}_i}\log p(\mathbf{G}_N | \mathbf{G}_i)) )$
            \Comment Solving Eq. \ref{eq:denoi_bridge}
            \State $\mathbf{G}_{i-1} \leftarrow \mathbf{G}_{i} + (t_{i-1} - t_i) \mathbf{d}_i$
            \Comment{Euler step}
            \If {$i \neq 1$}
                \State $\mathbf{d}_i' \leftarrow - \mathbf{f}(\mathbf{G}_{i-1}, t_{i-1}) + g^2(t_{i-1}) ( \frac{1}{2} \mathbf{s}_{\theta}(\mathbf{G}_{i-1}, \mathbf{G}_N, T) - \nabla_{\mathbf{G}_{i-1}}\log p(\mathbf{G}_N | \mathbf{G}_{i-1})) )$
                \State $\mathbf{G}_{i-1} \leftarrow \mathbf{G}_{i} + (t_{i+1} - t_i) (\frac{1}{2} \mathbf{d}_i + \frac{1}{2} \mathbf{d}_i')$
                \Comment{Heun's 2nd order correction}
            \EndIf
        \EndFor
        \State \textbf{return} $\mathbf{G}_0$
    \end{algorithmic}
\end{algorithm*}


Alg. \ref{alg:sampler} illustrates the deterministic sampling procedure of our model. The output of the score matching model $\mathbf{s}_{\theta}$ is computed via Eq. \eqref{eq:score_model} and \eqref{eq:edm}. 
The $h$-transform drift $\nabla_{\mathbf{G}_i}\log p(\mathbf{G}_N | \mathbf{G}_i)$ is derived by calculating the score of the transition probability from $t_i$ to $t_N$. This computation is tractable when using the VE and VP bridge designs, since both of them employs Gaussian transition kernels. Our time discretization strategy is discussed in \cref{sec:time_disc}. 

\section{Experiments}

\subsection{Experimental Setup}

\paragraph{Data}
CrossDocked2020 V1.3 dataset \cite{francoeur2020three} is used for model development and evaluation. The dataset is filtered by preserving the ligands with intimate protein binding poses (RMSE $<$ 0.1 \AA), which results in around 242 thousand ligands. CrossDocked2020 data has organized the ligand and protein files according to pockets. We thus split the dataset in the same way, ensuring molecules with similar structures or biological targets occur either in the training or the sampling dataset. 80\% of the pockets are used for training, 10\% for validation and 10\% for testing, resulting in around 197 thousand molecules in the training set, 19 thousand molecules in the validation set and 15 thousand molecules in the testing set. Details of data preprocessing are illustrated in \ref{sec:data_preproc}.

\paragraph{Baselines}

We benchmark proposed PharmacoBridge against various state-of-the-art 3D molecule design approaches. In the \textit{unconditional generation} task, we compare our method with EDM \cite{hoogeboom2022equivariant} and GruM \cite{jo2023graph}. EDM employed the typical DDPM scheme with an $E(3)$-equivariant GNN backbone for 3D molecule generation. GruM devised its generative process as a destination-predicting diffusion mixture, but never explored conditional generation by fixing the initial point. 

In the \textit{pharmacophore-guided generation} task, we compare our method with Pocket2Mol \cite{peng2022pocket2mol} and TargetDiff \cite{guan20233d}. Pocket2Mol leveraged auto-regressive generation for pocket-guided molecule design. TargetDiff adapted DDPM by incorporating pocket information into the denoising backbone with the attention mechanism. Candidates generated by both methods exhibited potential binding affinity with the target protein pockets. Another recent pharmacophore-guided molecule generation approach, PGMG \cite{zhu2023pharmacophore}, is not considered since PGMG generated SMILES rather than 3D structures, which is fundamentally different from our approach. 

\subsection{Unconditional Generation}

We tested our model on the unconditional generation task to evaluate several chemical and physical properties of the generated molecules. Specifically, 10,000 molecules were sampled using our model or each baseline. Validity, novelty, uniqueness, synthetic accessibity (SA) and quantitative estimate of drug-likeness (QED) were assessed on the generated molecules. 

\begin{table}[ht]
    \centering
    \caption{Unconditional generation performance of PharmacoBridge and baselines. }
    \resizebox{\linewidth}{!}{
    \begin{tabular}{ccc|ccc}
        \toprule
        \multicolumn{3}{c|}{} & Validity (\%) & Uniqueness (\%) & Novelty (\%) \\ \midrule
        \multicolumn{3}{c|}{EDM} & \textbf{100.00} & 6.89 & 100.00 \\
        \multicolumn{3}{c|}{GruM} & \textbf{100.00} & 82.60 & 100.00 \\ \midrule
        \multirow{4}{*}{PharmacoBridge} & \multirow{2}{*}{VE} & Basic & 96.63 & 85.37 & 100.00 \\
         &  & Aromatic & 88.27 & 90.90 & 100.00 \\
         & \multirow{2}{*}{VP} & Basic & 99.91 & 90.72 & 99.98 \\
         &  & Aromatic & 99.96 & \textbf{91.94} & \textbf{100.00} \\ \bottomrule
    \end{tabular}
    }
    \label{tab:uncond}
\end{table}


\begin{figure*}[ht]
    \centering
    \includegraphics[width=0.95\textwidth]{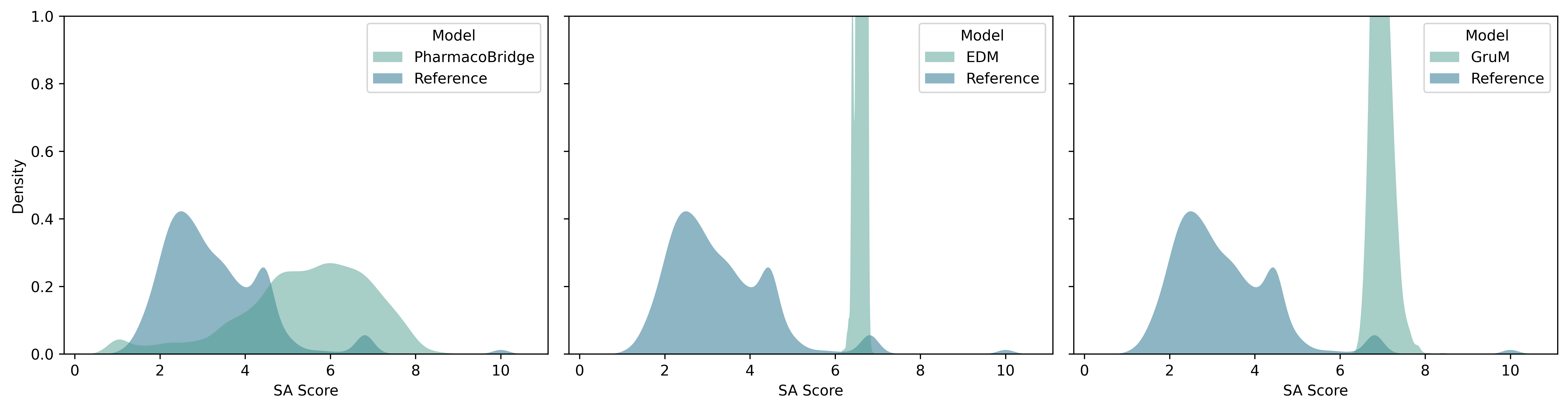}
    \caption{SA score ($\downarrow$) distribution. SA scores of the molecules generated by EDM and GruM are concentrated between 6.0 and 8.0, which is obviously larger than the distribution of the original dataset. The SA scores achieved by our model are lower than the baselines and more evenly distributed like the original dataset distribution. }
    \label{fig:sa}
\end{figure*}

\begin{figure*}[ht]
    \centering
    \includegraphics[width=0.95\textwidth]{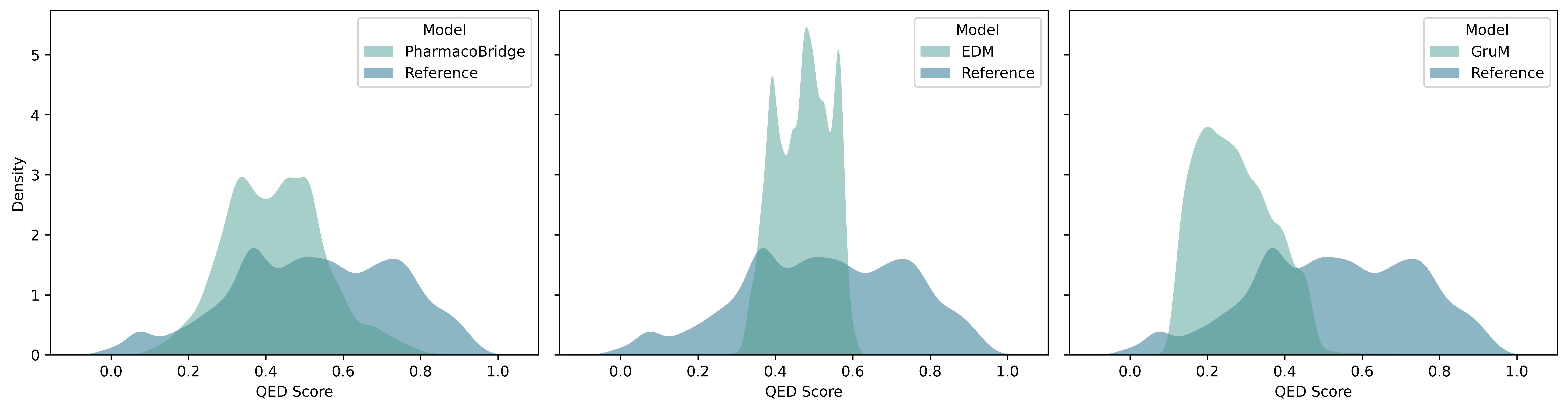}
    \caption{QED score ($\uparrow$) distribution. Molecules generated by our model and EDM exhibit higher QED scores than GruM. The distribution achieved by our model shows similar pattern to the dataset distribution. }
    \label{fig:qed}
\end{figure*}

As shown in Table \ref{tab:uncond}, EDM and GruM exhibited 100\% validity and novelty, but tended to produce repeated samples, resulting in relatively low uniqueness. In the ablation study to explore the effects of different bridge types and feature types, we found that VP bridge-based models outperformed VE-based ones, especially in terms of validity. VP-based PharmacoBridge with aromatic features achieved the highest uniqueness and nearly 100\% validity and novelty. 

\begin{figure*}[ht]
    \centering
    \includegraphics[width=0.95\textwidth]{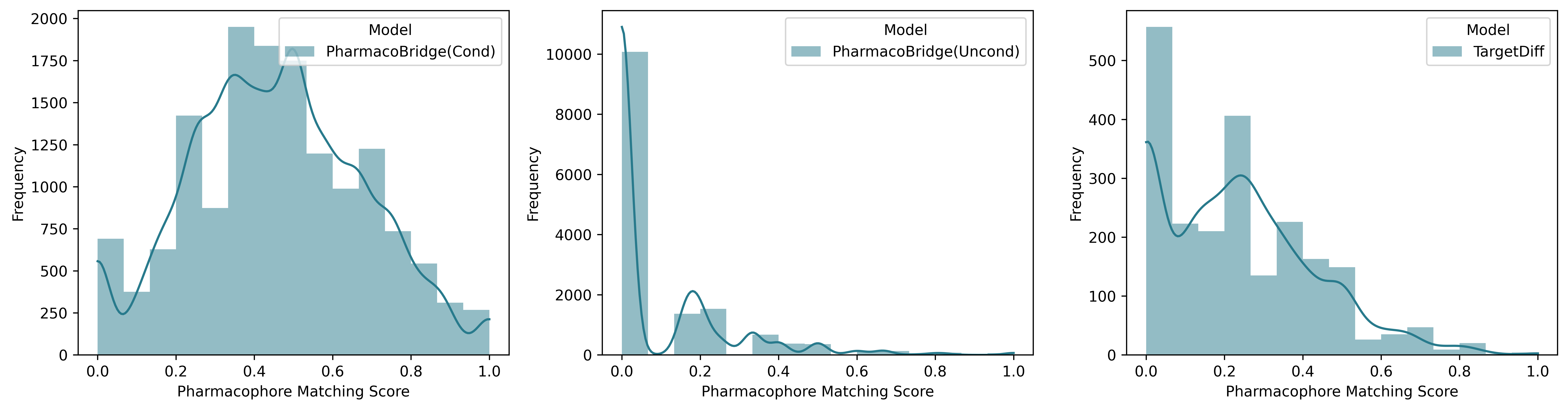}
    \caption{Pharmacophore matching sore distribution. Compared with Unconstrained generation and TargetDiff, pharmacophore-guided generation significantly enhanced the matching of pharmacophores extracted from original ligands and generated molecules. Notably, TargetDiff suffered from low validity, which resulting in less available molecules.}
    \label{fig:lig_pp}
\end{figure*}


Figure \ref{fig:sa} and \ref{fig:qed} present the distribution of SA and QED scores of generated molecules. Note that lower SA scores indicate better synthetic accessibility, and higher QED scores indicate better drug-likeness. Our model generated molecules with evenly distributed SA scores, whereas the SA scores of EDM and GruM generated molecules fell between 6.0 and 8.0, which is distant from the data distribution. In the QED assessment, GruM obtained relatively lower scores while our model and EDM generated molecules with better drug-likeness. Additional ablation study results on SA and QED evaluations are illustrated in \cref{sec:ablation}. 

In conclusion, PharmacoBridge achieved comparable validity, novelty and QED scores against state-of-the-art baselines and outperformed the baselines in terms of uniqueness and SA scores. In the ablation study, we found VE bridge-based PharmacoBridge was inclined to produce invalid chemical structures, whereas VP bridge-based PharmacoBridge exhibited better performance. This is aligned with the findings observed in previous works comparing VP and VE-based diffusion models \cite{zhou2023denoising, song2020score}. 
Moreover, incorporating aromatic features brought slight improvements in uniqueness and novelty. Therefore, we continue the subsequent experiments with VP-based PharmacoBridge with aromatic features. 

\subsection{Pharmacophore-guided Hit Molecule Design}

We further demonstrate PharmacoBridge in pharmacophore-guided drug design tasks, including ligand-based drug design and structure-based drug design (SBDD). 

\subsubsection{Ligand-based Drug Design}

Ligand-based drug design relies on the knowledge of active molecules whose target structures remain obscure. By analyzing the chemical properties and structures of these ligands, we can create pharmacophore hypotheses to identify or design new compounds with similar or improved interactions. In order to assess our model in the ligand-based drug design task, we sample new molecules using the pharmacophore features extracted from known active ligands. The pocket structure of each ligand's receptor protein is utilized as the constraint for sampling from Pocket2Mol and TargetDiff. 

We extracted all pharmacophores of around 15 thousand original ligands in the testing set, and removed the smaller groups that overlapped with the larger ones. New samples were generated by our model constrained by the extracted pharmacophores. We then computed the pharmacophore matching score reflecting the recovery rate of desired pharmacophore arrangements. Specifically, we calculated the distance between pharmacophores of generated molecules and original ligands. If the type of a pair of pharmacophores coincides and their distance is less than 1.5 \AA, we consider this as a match. The matching score of each ligand is calculated by $N_{match} / N_{total}$, where $N_{match}$ and $N_{total}$ represent the number of matches and the number of existing pharmacophores, respectively.

Figure \ref{fig:lig_pp} displays the histograms of pharmacophore matching scores. TargetDiff, which designs drugs constrained by pocket structures, were used as our baseline. Notably, only around two thousand molecules generated by TargetDiff were valid. Pocket2Mol was excluded from this assessment due to its limitation to autoregressive generation, which significantly prolongs the time required to generate approximately 15,000 samples without support for batch processing. In addition, we included molecules generated by the unconditional version of our model as one of the baselines for comparison. As shown in Figure \ref{fig:lig_pp}, PharmacoBridge significantly enhanced the pharmacophore recovery compared to unconstrained generation and TargetDiff. This suggests that the generated molecules have great potential to interact with the target protein, engaging not only within the desired pocket but also with the particular residue. 

\subsubsection{Structure-based Drug Design}

\begin{table}[ht]
    \centering
    \caption{Average pharmacophore matching scores of SBDD.}
    \label{tab:sbdd_pp}
    \begin{tabular}{cccc}
        \toprule
        PDB ID & PharmacoBridge & Pocket2Mol & TargetDiff \\ \midrule
        1EOC & 0.76 & 0.34 & 0.17 \\
        5LSA & 0.71 & 0.46 & 0.25 \\
        4H2Z & 0.91 & 0.49 & 0.38 \\
        1J06 & 1.00 & 0.08 & 0.11 \\
        3OCC & 0.74 & 0.22 & 0.24 \\
        5UEV & 0.84 & 0.39 & 0.17 \\
        5FE6 & 1.00 & 0.45 & 0.23 \\
        5LPJ & 0.83 & 0.41 & 0.17 \\
        5IJS & 0.97 & 0.31 & 0.14 \\
        4TTI & 0.98 & 0.37 & 0.20 \\ \bottomrule
    \end{tabular}
\end{table}

SBDD, in contrast to ligand-based drug design, utilizes the 3D structure of the target, typically a protein, to guide drug development. This method allows researchers to design molecules that fit precisely within the binding pocket of the target, optimizing interactions at them atomic level for enhanced efficacy and specificity. 

We selected 10 protein targets, manually identified the essential pharmacophore features for the ligands to bind with corresponding protein targets, and generated 100 hit candidates guided by each identified pharmacophore. Firstly, we evaluated the matching scores by comparing the pharmacophore features used as constraints and the ones extracted from the generated molecules. The average matching scores of PharmacoBridge and the baseline models are presented in Table \ref{tab:sbdd_pp}. Our method achieved the highest average matching score in all cases.


\begin{figure*}[ht]
    \centering
    \includegraphics[width=0.95\textwidth]{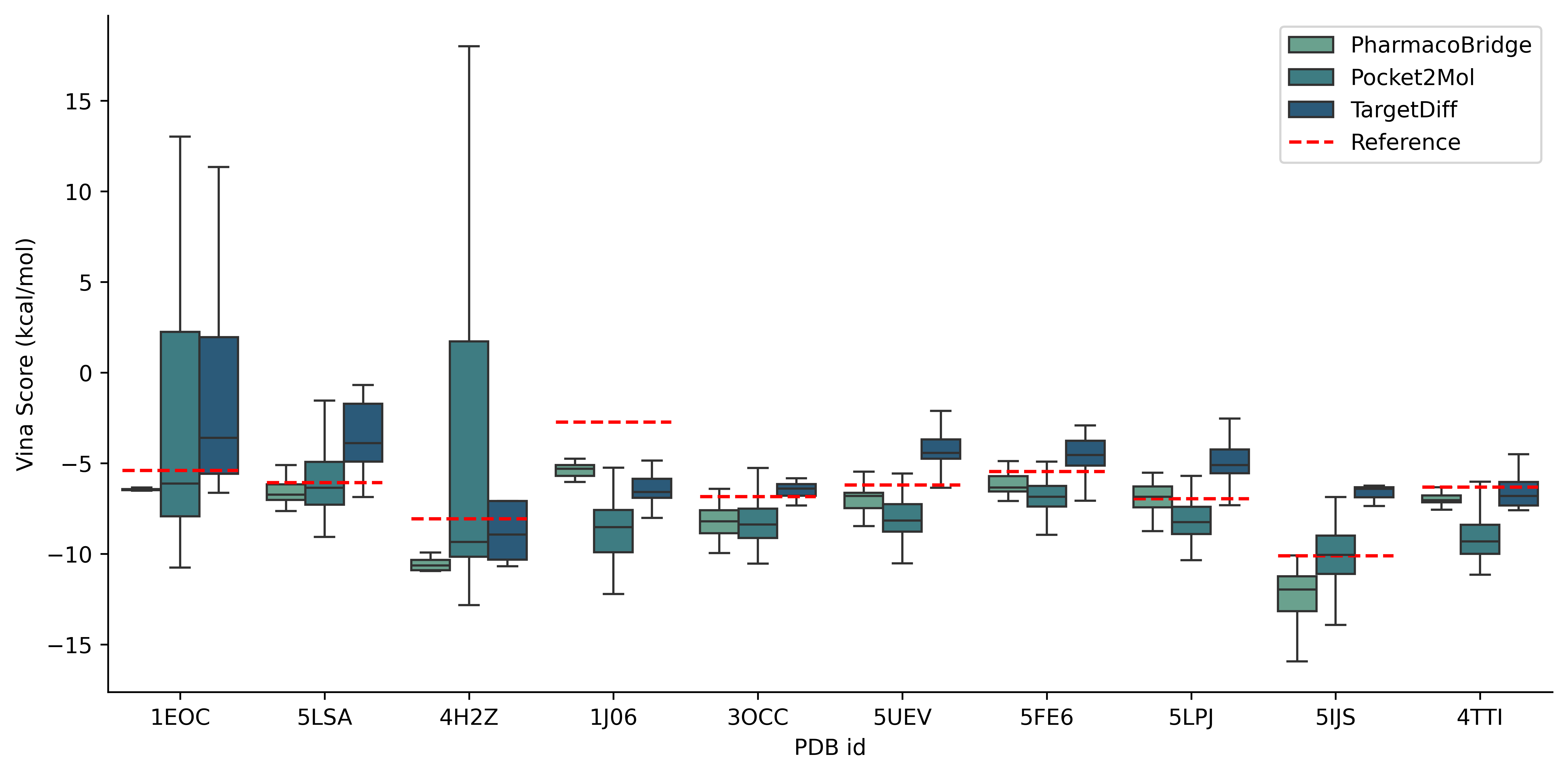}
    \caption{Distribution of Vina scores, with lower Vina score representing higher binding affinity. The reference scores given by the original ligands are indicated by red dashed lines in the figure. PharmacoBridge consistently generated molecules with higher binding affinities. Although Pocket2Mol sometimes outperformed, it produced molecules with a notably broader range of Vina scores in certain groups, e.g., 1EOC and 4H2Z. TargetDiff failed to generate molecules with binding affinities surpassing those of the original ligands in most cases.}
    \label{fig:ba_dist}
\end{figure*}

\begin{table*}[ht]
    \centering
    \caption{Percentage of generated molecules with higher binding affinity. }
    \label{tab:high_ba_ratio}
    \resizebox{\textwidth}{!}{
    \begin{tabular}{cccccccc}
        \toprule
        \multirow{2}{*}[-0.5\dimexpr \aboverulesep + \belowrulesep + \cmidrulewidth]{PDB ID} & \multicolumn{2}{c}{PharmacoBridge} & \multicolumn{2}{c}{Pocket2Mol} & \multicolumn{2}{c}{TargetDiff} & \multirow{2}{*}{Reference score} \\ \cmidrule(r){2-3} \cmidrule(r){4-5} \cmidrule(r){6-7}
         & High affinity ratio (\%) & Sample num & High affinity ratio (\%) & Sample num & High affinity ratio (\%) & Sample num & (kcal/mol) \\ \midrule
        1EOC & 99.00 & 100 & 55.34 & 100 & 30.00 & 10 & -5.39 \\
        5LSA & 76.00 & 100 & 52.78 & 100 & 3.70 & 28 & -6.06 \\
        4H2Z & 100.00 & 100 & 57.69 & 100 & 55.56 & 9 & -8.06 \\
        1J06 & 100.00 & 100 & 98.15 & 100 & 100.00 & 16 & -2.73 \\
        3OCC & 93.00 & 100 & 91.51 & 100 & 25.00 & 8 & -6.84 \\
        5UEV & 91.00 & 100 & 94.87 & 100 & 8.89 & 45 & -6.20 \\
        5FE6 & 80.00 & 100 & 90.99 & 100 & 18.75 & 32 & -5.46 \\
        5LPJ & 48.00 & 100 & 83.02 & 100 & 10.00 & 20 & -6.95 \\
        5IJS & 98.99 & 99 & 48.54 & 100 & 0.00 & 3 & -10.11 \\
        4TTI & 99.00 & 100 & 91.18 & 100 & 66.57 & 12 & -6.32 \\ \bottomrule
    \end{tabular}
    }
\end{table*}

Furthermore, we performed molecular docking analysis to demonstrate if the generated molecules are potential to interact with the protein targets. Gnina, a molecular docking software based on AutoDock Vina, was employed to test the binding affinity. Figure \ref{fig:ba_dist} shows the distribution of the Vina scores in box plots, with lower values representing higher binding affinities. The red dashed line in each group indicates the reference docking score of the original ligand provided by the CrossDocked dataset. It is evident that our method consistently generated molecules with higher binding affinities than the original ligand across each group. Pocket2Mol sometimes outperformed our method, but the generated molecules exhibited a significantly wide range of Vina scores, especially in cases of 1EOC and 4H2Z, rendering this approach very unstable. TargetDiff achieved the worst performance with most of the generated molecules gaining higher Vina scores than the reference ligand. 

In addition, the ratio of the generated molecules with higher binding affinity is calculated and shown in Table \ref{tab:high_ba_ratio}, together with the number of available samples and the reference Vina score. We observe that PharmacoBridge is able to produce hit candidates with over 90\% higher binding affinities in most cases. Pocket2Mol is competitive in certain groups such as 1J06, 3OCC, 5UEV, 5FE6 and 5IJS, but produced only around 50\% of higher affinity samples in the other groups. TargetDiff, however, is limited by a low validity rate. We present example molecules generated by PharmacoBridge and baselines using the pharmacophores identified in the binding complex structures of 1EOC, 3OCC and 5UEV in Figure \ref{fig:samples} in \cref{sec:samples}.

\section{Conclusion}
In this work, we presented PharmacoBridge - a diffusion bridge model designed for graph translation. Bridge processes enable us to construct a transitional probability flow between any two distributions and the idea of diffusion model makes such transition kernel tractable. We proposed a practical solution to leverage such models for graph generation constrained by the cluster arrangements of nodes, and demonstrated our model on the task of pharmacophore-guided \textit{de novo} hit molecule design. We parameterized our denoiser with $SE(3)$-equivariant GNN. Constrained by the pharmacophore models extracted from the active ligands, our model managed to generate potential hit candidates with higher binding affinity with the receptor of the original ligands.


\bibliography{example_paper}
\bibliographystyle{icml2025}

\newpage
\appendix
\onecolumn

\section{Proofs}

\subsection{Derivation of the Denoising Diffusion Bridge}
\label{sec:denoi_bridge_proof}

Applying Doob's $h$-transforms to the diffusion process with the law of \eqref{eq:diff_proc} results in a diffusion bridge solving Eq. \eqref{eq:diff_bridge}, i.e., $d\mathbf{G}_t = [\mathbf{f}(\mathbf{G}_t, t) + g(t)^2 \nabla_{\mathbf{G}_t}\log p(\mathbf{G}_T | \mathbf{G}_t)]dt + g(t)d\mathbf{w}_t$, which is pinned on both ends $\mathbf{G}_0 = \mathbf{g} \sim q_{data}(\mathbf{g})$ and $\mathbf{G}_T = \mathbf{\Gamma} \sim q_{data}(\mathbf{\Gamma})$. Pioneer works \cite{zhou2023denoising, peluchetti2023non} have proven that the time evolution of the conditional probability fixed at both ends $p(\mathbf{G}_t | \mathbf{G}_0 = \mathbf{g}, \mathbf{G}_T = \mathbf{\Gamma})$ follows the Fokker-Planck equation which amounts to the $h$-transformed diffusion bridge SDE \eqref{eq:diff_bridge}. Starting from here, we provide the proof of \cref{thm:denoisingbridge}. 
\begin{equation}
    \label{eq:diff_bridge_fp}
    \begin{aligned}
    &\frac{\partial}{\partial t} p(\mathbf{G}_t | \mathbf{G}_0 = \mathbf{g}, \mathbf{G}_T = \mathbf{\Gamma}) \\
    = &- \sum_{i=1}^d \frac{\partial}{\partial \mathbf{G}_t^i} \left[ \left(\mathbf{f}^i(\mathbf{G}_t, t) + g(t)^2 \sum_{j=1}^d \frac{\partial}{\partial \mathbf{G}_t^j} \log p(\mathbf{G}_T | \mathbf{G}_t) \right) p(\mathbf{G}_t | \mathbf{G}_0 = \mathbf{g}, \mathbf{G}_T = \mathbf{\Gamma}) \right] \\
    &+ \frac{1}{2} g^2(t) \sum_{i=1}^d \sum_{j=1}^d \frac{\partial ^2}{\partial \mathbf{G}_t^i \mathbf{G}_t^j} p(\mathbf{G}_t | \mathbf{G}_0 = \mathbf{g}, \mathbf{G}_T = \mathbf{\Gamma})
    \end{aligned}
\end{equation}

\denoisingbridge*
\begin{proof}
    As we design the sampling distribution to satisfy $q(\mathbf{G}_t | \mathbf{G}_0, \mathbf{G}_T) = p(\mathbf{G}_t | \mathbf{G}_0, \mathbf{G}_T)$ in Eq. \eqref{eq:q_dist}. The transition density associated with SDE \eqref{eq:denoi_bridge} equals to the expectation with respect to $\mathbf{G}_0$:
    \begin{equation}
    \label{eq:q_exp}
    \begin{aligned}
        q(\mathbf{G}_t | \mathbf{G}_T = \mathbf{\Gamma}) &= \int_{\mathbf{G}_0} q(\mathbf{G}_t | \mathbf{G}_0, \mathbf{G}_T = \mathbf{\Gamma}) q_{data}(\mathbf{G}_0 | \mathbf{G}_T = \mathbf{\Gamma}) d\mathbf{G}_0 \\
        &= \int_{\mathbf{G}_0} p(\mathbf{G}_t | \mathbf{G}_0, \mathbf{G}_T = \mathbf{\Gamma}) q_{data}(\mathbf{G}_0 | \mathbf{G}_T = \mathbf{\Gamma}) d\mathbf{G}_0 \\
        &= \mathbb{E}_{\mathbf{G}_0 \sim q_{data}(\mathbf{G}_0 | \mathbf{G}_T = \mathbf{\Gamma})} \left[ p(\mathbf{G}_t | \mathbf{G}_0, \mathbf{G}_T = \mathbf{\Gamma}) \right]
    \end{aligned}
    \end{equation}

    To investigate the time evolution of $q(\mathbf{G}_t | \mathbf{G}_T)$, we marginalize out $\mathbf{G}_0$ in the Fokker-Planck equation \eqref{eq:diff_bridge_fp}:
    \begin{equation}
    \label{eq:fp_exp}
    \begin{aligned}
        &\mathbb{E}_{\mathbf{G}_0 \sim q_{data}(\mathbf{G}_0 | \mathbf{G}_T = \mathbf{\Gamma})} \left[ \frac{\partial}{\partial t} p(\mathbf{G}_t | \mathbf{G}_0, \mathbf{G}_T = \mathbf{\Gamma}) \right] \\
        = &- \sum_{i=1}^d \frac{\partial}{\partial \mathbf{G}_t^i} \left[ \left(\mathbf{f}^i(\mathbf{G}_t, t) + g(t)^2 \sum_{j=1}^d \frac{\partial}{\partial \mathbf{G}_t^j}\log p(\mathbf{G}_T | \mathbf{G}_t) \right) \mathbb{E}_{\mathbf{G}_0 \sim q_{data}(\mathbf{G}_0 | \mathbf{G}_T = \mathbf{\Gamma})} \left[p(\mathbf{G}_t | \mathbf{G}_0, \mathbf{G}_T = \mathbf{\Gamma}) \right] \right] + \\
        &+ \frac{1}{2} g^2(t) \sum_{i=1}^d \sum_{j=1}^d \frac{\partial ^2}{\partial \mathbf{G}_t^i \mathbf{G}_t^j} \mathbb{E}_{\mathbf{G}_0 \sim q_{data}(\mathbf{G}_0 | \mathbf{G}_T = \mathbf{\Gamma})} [p(\mathbf{G}_t | \mathbf{G}_0, \mathbf{G}_T = \mathbf{\Gamma}]
    \end{aligned}
    \end{equation}
    
    Using the result in Eq. \eqref{eq:q_exp}, Eq. \eqref{eq:fp_exp} becomes
    \begin{equation}
        \label{eq:denoi_bridge_fp}
        \begin{aligned}
            &\frac{\partial}{\partial t} q(\mathbf{G}_t | \mathbf{G}_T = \mathbf{\Gamma})  \\
            = &- \sum_{i=1}^d \frac{\partial}{\partial \mathbf{G}_t^i} \left[ \left(\mathbf{f}^i(\mathbf{G}_t, t) + g(t)^2 \nabla_{\mathbf{G}_t}\log p(\mathbf{G}_T | \mathbf{G}_t) \right) q(\mathbf{G}_t | \mathbf{G}_T = \mathbf{\Gamma}) \right] + \frac{1}{2} g^2(t) \sum_{i=1}^d \sum_{j=1}^d \frac{\partial ^2}{\partial \mathbf{G}_t^i \mathbf{G}_t^j} q(\mathbf{G}_t | \mathbf{G}_T = \mathbf{\Gamma}) \\
            = &- \sum_{i=1}^d \frac{\partial}{\partial \mathbf{G}_t^i} \left[ \left( \mathbf{f}^i(\mathbf{G}_t, t) + g(t)^2 \nabla_{\mathbf{G}_t}\log p(\mathbf{G}_T | \mathbf{G}_t) - \frac{1}{2}g^2(t) \nabla_{\mathbf{G}_t}\log q(\mathbf{G}_t | \mathbf{G}_T = \mathbf{\Gamma}) \right) q(\mathbf{G}_t | \mathbf{G}_T = \mathbf{\Gamma}) \right]
        \end{aligned}
    \end{equation}
    which amounts to the Liouville equation (a special case of the Fokker-Planck equation when the diffusion coefficient is zero):
    \begin{equation}
    \label{eq:liouville}
    \begin{aligned}
        &\frac{\partial}{\partial t} q(\mathbf{G}_t | \mathbf{G}_T = \mathbf{\Gamma}) = - \sum_{i=1}^d \frac{\partial}{\partial \mathbf{G}_t^i} \left[ \Tilde{\mathbf{f}}^i(\mathbf{G}_t, t) q(\mathbf{G}_t | \mathbf{G}_T = \mathbf{\Gamma}) \right], \\
        \mathrm{where} \quad &\Tilde{\mathbf{f}}^i(\mathbf{G}_t, t) = \left( \mathbf{f}^i(\mathbf{G}_t, t) + g(t)^2 \nabla_{\mathbf{G}_t}\log p(\mathbf{G}_T | \mathbf{G}_t) - \frac{1}{2}g^2(t) \nabla_{\mathbf{G}_t}\log q(\mathbf{G}_t | \mathbf{G}_T = \mathbf{\Gamma}) \right)
    \end{aligned}
    \end{equation}
    This defines a probability flow ODE $d\mathbf{G}_t = \Tilde{\mathbf{f}}(\mathbf{G}_t, t) dt$ that is the same with Eq. \eqref{eq:denoi_bridge} specified in \cref{thm:denoisingbridge}.
\end{proof}

\subsection{$SE(3)$-equivariant Denoising Diffusion Bridge}
\label{sec:eddb_proof}

As proposed by \cite{luo2024bridging}, given a diffusion process solving $d\mathbf{G}_t = \mathbf{f}(\mathbf{G}_t, t)dt + g(t)d\mathbf{w}_t$, its associated transition density $p(\mathbf{G}_T | \mathbf{G}_t)$ is $SE(3)$-equivariant, i.e., $p(g \cdot \mathbf{G}_T | g \cdot \mathbf{G}_t) = p(\mathbf{G}_T | \mathbf{G}_t)$, $\forall g \in SE(3)$,
if the following conditions are satisfied: (1) the initial density $q(\mathbf{G}_0)$ is $SE(3)$-invariant; and (2) $\mathbf{f}(\cdot, t)$ is $SO(3)$-equivariant and $T(3)$-invariant; (3) the transition density of $\mathbf{w}_t$ is $SE(3)$-equivariant. 

In our case, the initial density $q(\mathbf{G}_0) = q_{data}(\mathbf{g}) = \delta(\mathbf{G}_0 - \mathbf{g})$ is a Dirac distribution that is invariant to rotations or translations. As we design the bridges using VP and VE schedules, the drift functions (as shown in Table \ref{tab:bridge_design}) are apparently $SO(3)$-equivariant. To meet $T(3)$-invariance, we calculate the center of mass of the system $(\mathbf{g}, \mathbf{\Gamma})$ and move the center to zero. Thus, we have a diffusion process with equivariant transition density $p(\mathbf{G}_T | \mathbf{G}_t)$. 

Based on this equivariant diffusion process, we can further devise the equivariant denoising diffusion bridge:
\eddb*

\begin{proof}
    We first investigate the equivariance of $p(\mathbf{G}_t | \mathbf{G}_0, \mathbf{G}_T)$, the marginal distribution of $\mathbf{G}_t$ with fixed endpoints. According to \cite{zhou2023denoising}, this distribution follows Bayes' rule 
    \begin{equation}
        \label{eq:bayes}
        p(\mathbf{G}_t | \mathbf{G}_0, \mathbf{G}_T) = \frac{p(\mathbf{G}_t | \mathbf{G}_0) p(\mathbf{G}_T | \mathbf{G}_t)}{p(\mathbf{G}_T | \mathbf{G}_0)}
    \end{equation}
    As we have obtained an $SE(3)$-equivariant transition density $p(\mathbf{G}_T | \mathbf{G}_t)$, when $t$ equals to zero, it is apparent that $p(\mathbf{G}_T | \mathbf{G}_0)$ is also $SE(3)$-equivariant. We therefore focus on investigating the equivariance of $p(\mathbf{G}_t | \mathbf{G}_0)$ now.

    Since each point cloud is defined by node position matrix and feature matrix, i.e., $\mathbf{G} = (\mathbf{x}, \mathbf{h})$, and the node features always remain invariant to rotations and translations, we only consider the effect of transformations with regard to $\mathbf{x}$ now. 
    The forward time evolution of $p(\mathbf{x}_t | \mathbf{x}_0)$ follows the Fokker-Planck equation:
    \begin{equation}
    \label{eq:diff_proc_fp}
    \begin{aligned}
        \frac{\partial}{\partial t}p(\mathbf{x}_t | \mathbf{x}_0) 
        = - \sum_{i=1}^d \frac{\partial}{\partial \mathbf{x}_t^i} \left[\mathbf{f}^i (\mathbf{x}_t, t) p(\mathbf{x}_t | \mathbf{x}_0) \right] + \frac{1}{2} g^2(t) \sum_{i=1}^d  \frac{\partial^2}{\partial \mathbf{x}_t^{i2}} p(\mathbf{x}_t | \mathbf{x}_0)
    \end{aligned}    
    \end{equation}
    
    For any transformation $g \in SE(3)$, let $\mathbf{y} = g \cdot \mathbf{x} = \mathbf{R}\mathbf{x} + \mathbf{t}$ denote the transformed atom positions, where $\mathbf{R} \in \mathbb{R}^{3 \times 3}$ is an orthogonal matrix defining the rotation and $\mathbf{t} \in \mathbb{R}^3$ defines the translation. The Fokker-Planck equation of the transformed data becomes
    \begin{equation}
    \label{eq:diff_proc_fp_trans}
    \begin{aligned}
        \frac{\partial}{\partial t}p(\mathbf{y}_t | \mathbf{y}_0) 
        = - \sum_{i=1}^d \frac{\partial}{\partial \mathbf{y}_t^i} \left[\mathbf{f}^i (\mathbf{y}_t, t) p(\mathbf{y}_t | \mathbf{y}_0) \right] + \frac{1}{2} g^2(t) \sum_{i=1}^d \frac{\partial^2}{\partial \mathbf{y}_t^{i2}} p(\mathbf{y}_t | \mathbf{y}_0)
    \end{aligned} 
    \end{equation}
    
    The partial derivative with respect to $\mathbf{y}$ follows
    \begin{equation}
    \label{eq:chain_rule_trans}
    \begin{aligned}
        \frac{\partial}{\partial \mathbf{y}_t^i} &= \sum_{j=1}^d  \frac{\partial \mathbf{x}_t^j}{\partial \mathbf{y}_t^i} \cdot \frac{\partial}{\partial \mathbf{x}_t^j} = \sum_{j=1}^d  \mathbf{R}_{ji}^{-1} \frac{\partial}{\partial \mathbf{x}_t^j} = \sum_{j=1}^d  \mathbf{R}_{ij} \frac{\partial}{\partial \mathbf{x}_t^j}, \\
        \mathrm{and} \quad \frac{\partial^2}{\partial \mathbf{y}_t^{i2}} &= \sum_{j=1}^d \sum_{k=1}^d \frac{\partial \mathbf{x}_t^j}{\partial \mathbf{y}_t^i} \cdot \frac{\partial}{\partial \mathbf{x}_t^j} \cdot \frac{\partial \mathbf{x}_t^k}{\partial \mathbf{y}_t^i} \cdot \frac{\partial}{\partial \mathbf{x}_t^k} = \sum_{j=1}^d \sum_{k=1}^d \mathbf{R}_{ij} \mathbf{R}_{ik} \frac{\partial}{\partial \mathbf{x}_t^j} \frac{\partial}{\partial \mathbf{x}_t^k}
    \end{aligned}
    \end{equation}
    Since $\mathbf{f}(\cdot, t)$ is designed to be $SO(3)$-equivariant and $T(3)$-invariant, we have
    \begin{equation}
    \label{eq:drift_trans}
        \mathbf{f}^i(\mathbf{y}_t, t) = \mathbf{f}^i(\mathbf{R}\mathbf{x}_t + \mathbf{t}, t) = \left(\mathbf{R}\mathbf{f}(\mathbf{x}_t, t) \right)_i = \sum_{k=1}^d \mathbf{R}_{ik} \mathbf{f}^k(\mathbf{x}_t, t)
    \end{equation}

    With Eq. \eqref{eq:chain_rule_trans} and \eqref{eq:drift_trans}, we can readily rewrite the transformed Fokker-Planck equation as
    \begin{equation}
    \label{eq:diff_proc_fp_trans2}
    \begin{aligned}
        \frac{\partial}{\partial t}p(\mathbf{y}_t | \mathbf{y}_0) = - \sum_{i=1}^d \sum_{j=1}^d \sum_{k=1}^d \mathbf{R}_{ij} \mathbf{R}_{ik} \frac{\partial}{\partial \mathbf{x}_t^j} \left[\mathbf{f}^k(\mathbf{x}_t, t) p(\mathbf{y}_t | \mathbf{y}_0) \right] + \frac{1}{2}g^2(t) \sum_{i=1}^d \sum_{j=1}^d \sum_{k=1}^d \mathbf{R}_{ij} \mathbf{R}_{ik} \frac{\partial}{\partial \mathbf{x}_t^j} \frac{\partial}{\partial \mathbf{x}_t^k} p(\mathbf{y}_t | \mathbf{y}_0)
    \end{aligned}
    \end{equation}
    Notably, since $\mathbf{R}$ is an orthogonal matrix, we have $\sum_{i=1}^d \mathbf{R}_{ij} \mathbf{R}_{ik} = \delta_{jk}$, which is zero when $j \neq k$. We further simplify \eqref{eq:diff_proc_fp_trans2} to
    \begin{equation}
    \label{eq:diff_proc_fp_trans3}
    \begin{aligned}
        \frac{\partial}{\partial t}p(\mathbf{y}_t | \mathbf{y}_0) &= - \sum_{j=1}^d \sum_{k=1}^d \delta_{jk} \frac{\partial}{\partial \mathbf{x}_t^j} \left[\mathbf{f}^k(\mathbf{x}_t, t) p(\mathbf{y}_t | \mathbf{y}_0) \right] + \frac{1}{2}g^2(t) \sum_{j=1}^d \sum_{k=1}^d \delta_{jk} \frac{\partial}{\partial \mathbf{x}_t^j} \frac{\partial}{\partial \mathbf{x}_t^k} p(\mathbf{y}_t | \mathbf{y}_0) \\
        &= - \sum_{j=1}^d \frac{\partial}{\partial \mathbf{x}_t^j} \left[\mathbf{f}^j(\mathbf{x}_t, t) p(\mathbf{y}_t | \mathbf{y}_0) \right] + \frac{1}{2}g^2(t) \sum_{j=1}^d \frac{\partial^2}{\partial \mathbf{x}_t^{j2}} p(\mathbf{y}_t | \mathbf{y}_0)
    \end{aligned}
    \end{equation}
    which characterizes the same time evolution as $p(\mathbf{x}_t | \mathbf{x}_0)$ in Eq. \eqref{eq:diff_proc_fp}. Moreover, the boundary condition $p(\mathbf{y} | \mathbf{y}_0) = \delta(\mathbf{y} - \mathbf{y}_0) = \delta(\mathbf{x} - \mathbf{x}_0) = p(\mathbf{x} | \mathbf{x}_0)$ is also $SE(3)$-equivariant. Therefore, the transition density $p(\mathbf{G}_t | \mathbf{G}_0)$ and the marginal distribution $p(\mathbf{G}_t | \mathbf{G}_0, \mathbf{G}_T)$ both satisfy $SE(3)$-equivariance, i.e., $p(g \cdot \mathbf{G}_t | g \cdot \mathbf{G}_0, g \cdot \mathbf{G}_T) = p(\mathbf{G}_t | \mathbf{G}_0, \mathbf{G}_T)$, $\forall g \in SE(3)$.

    Next, we look into the equivariance of the transition density associated with the denoising diffusion bridge $q(\mathbf{G}_t | \mathbf{G}_T)$. Focusing on the position matrix $\mathbf{x}$, we have $q(\mathbf{x}_t | \mathbf{x}_T) = \int_{\mathbf{x}_0} p(\mathbf{x}_t | \mathbf{x}_0, \mathbf{x}_T) q_{data}(\mathbf{x}_0 | \mathbf{x}_T) d\mathbf{x}_0$. Similarly, for any transformation $g \in SE(3)$, the transformed atom positions are denoted by $\mathbf{y} = g \cdot \mathbf{x} = \mathbf{R}\mathbf{x} + \mathbf{t}$. The transition density of $\mathbf{y}_t$ becomes
    \begin{equation}
    \label{eq:denoi_bridge_trans}
    \begin{aligned}
        q(\mathbf{y}_t | \mathbf{y}_T) &= \int p(\mathbf{y}_t | \mathbf{x}_0, \mathbf{y}_T) q_{data}(\mathbf{x}_0 | \mathbf{y}_T) d\mathbf{x}_0 \\
        &= \int p(g \cdot \mathbf{x}_t | g \cdot g^{-1} \cdot \mathbf{x}_0, g \cdot \mathbf{x}_T) q_{data}(g \cdot g^{-1} \cdot \mathbf{x}_0 | g \cdot \mathbf{x}_T) d\mathbf{x}_0
    \end{aligned}
    \end{equation}
    
    Since we have proven $p(g \cdot \mathbf{x}_t | g \cdot \mathbf{x}_0, g \cdot \mathbf{x}_T) = p(\mathbf{x}_t | \mathbf{x}_0, \mathbf{x}_T)$, and the distribution of the paired molecule and pharmacophore data also satisfies $q_{data}(g \cdot \mathbf{G}_0 | g \cdot \mathbf{G}_T) = q_{data}(\mathbf{G}_0 | \mathbf{G}_T)$, we have
    \begin{equation}
    \label{eq:denoi_bridge_trans2}
    \begin{aligned}
        q(\mathbf{y}_t | \mathbf{y}_T) = \int p(\mathbf{x}_t | g^{-1} \cdot \mathbf{x}_0, \mathbf{x}_T) q_{data}(g^{-1} \cdot \mathbf{x}_0 | \mathbf{x}_T) d\mathbf{x}_0
    \end{aligned}
    \end{equation}

    Let $\mathbf{x}_0' = g^{-1} \cdot \mathbf{x}_0$, then $\mathbf{x}_0 = g \cdot \mathbf{x}_0' = \mathbf{R}\mathbf{x}_0' + \mathbf{t}$. Therefore, 
    \begin{equation}
        \label{eq:derivative_conv}
        \frac{\partial \mathbf{x}_0}{\partial \mathbf{x}_0'} = \mathbf{R}, \quad and \quad \frac{d\mathbf{x}_0}{d\mathbf{x}_0'} = \text{det}(\mathbf{R}) = 1
    \end{equation}
    since $\mathbf{R}$ is an orthogonal matrix. 

    Notably, Eq. \eqref{eq:denoi_bridge_trans2} becomes
    \begin{equation}
    \label{eq:denoi_bridge_trans3}
    \begin{aligned}
        q(\mathbf{y}_t | \mathbf{y}_T) &= \int p(\mathbf{x}_t | \mathbf{x}_0', \mathbf{x}_T) q_{data}(\mathbf{x}_0' | \mathbf{x}_T) \text{det}(\mathbf{R}) d\mathbf{x}_0' \\
        &= \int p(\mathbf{x}_t | \mathbf{x}_0, \mathbf{x}_T) q_{data}(\mathbf{x}_0 | \mathbf{x}_T) d\mathbf{x}_0 \\
        &= q(\mathbf{x}_t | \mathbf{x}_T)
    \end{aligned}
    \end{equation}
    which completes the proof. 
\end{proof}

\section{Implementation details}

\subsection{Bridge Parameterization}
\label{sec:bridge_para}

PharmacoBridge supports flexible choices of the bridge design, such as the drift function $\mathbf{f(\cdot, t)}$ and the diffusion coefficient $g(\cdot)$. We here provide two typical designs, i.e., VP and VE bridges in Table \ref{tab:bridge_design}. For VP bridge, the variance schedule is given by $\beta_t = (\beta_{max} - \beta_{min})t + \beta_{min}$, and $\alpha_t$ is the integral of $\beta_t$: $\alpha_t = \int_{0}^{t} \beta_{\tau} d \tau $. For VE bridge, we have $\alpha_t = 1$ which is time-invariant.

For the choice of $\sigma_0$ and $\sigma_T$, we have performed grid search among $\{0.1n | n \in [1:10]\}$. Finally, we set $\sigma_{0, pos} = 0.1$, $\sigma_{T, pos} = 0.3$, $\sigma_{0, feat} = 0.7$ and $\sigma_{T, feat} = 1.0$. The covariance is calculated via $\sigma_{0T} = \sigma_0^2 / 2$. 

\begin{table}[h]
    \centering
    \caption{Design of VP and VE bridges}
    \label{tab:bridge_design}
    \begin{tabular}{cccccc}
    \toprule 
    & $\mathbf{f}\left(\mathbf{G}_t, t\right)$ & $g^2(t)$ & $p\left(\mathbf{G}_t \mid \mathbf{G}_0\right)$ & $\mathrm{SNR}_t$ & $\nabla_{\mathbf{G}_t}\log p(\mathbf{G}_T | \mathbf{G}_t)$ \\
    \midrule
    $\mathrm{VP}$ & $-\frac{1}{2} \beta_t \mathbf{G}_t$ & $\beta_t$ & $\mathcal{N}\left(\alpha_t \mathbf{G}_0, \sigma_t^2 \boldsymbol{I}\right)$ & $\alpha_t^2 / \sigma_t^2$ & $\frac{ (\alpha_t / \alpha_T) \mathbf{G}_T - \mathbf{G}_t }{ \sigma_t^2 ( \mathrm{SNR}_t / \mathrm{SNR}_T - 1) }$ \\
    $\mathrm{VE}$ & $\mathbf{0}$ & $\frac{d}{d t} \sigma_t^2$ & $\mathcal{N}\left(\mathbf{G}_0, \sigma_t^2 \boldsymbol{I}\right)$ & $1 / \sigma_t^2$ & $\frac{\mathbf{G}_T - \mathbf{G}_t}{\sigma_T^2 - \sigma_t^2}$ \\
    \bottomrule
    \end{tabular}
\end{table}

\subsection{Scaling Functions of Denoiser $D_{\theta}$} 
\label{sec:scaling}

Following DDBM \cite{zhou2023denoising}, our bridge scalings are designed as 
\begin{equation}
    \label{eq:scaling}
    \begin{aligned}
        &c_{in}(t) = \frac{1}{\sqrt{a_t^2 \sigma_T^2 + b_t^2 \sigma_t^2 + 2a_t b_t \sigma_{0T} + c_t}}, \\
        &c_{out}(t) = \sqrt{a_t^2(\sigma_T^2 \sigma_0^2 - \sigma_{0T}^2) + \sigma_0^2 c_t} * c_{in}(t), \\
        &c_{skip}(t) = (b_t \sigma_0^2 + a_t \sigma_{0T}) * c_{in}^2 (t), \\
        &c_{noise}(t) = \frac{1}{4} \log(\sigma_t), \\
        &\lambda(t) = \frac{1}{c_{out}(t)^2}, \\
        &\mathrm{where} \quad a_t = \frac{\alpha_t \text{SNR}_T}{\alpha_T \text{SNR}_t}, \quad b_t = \alpha_t(1 - \frac{\text{SNR}_T}{\text{SNR}_t}), \quad \mathrm{and} \quad c_t = \sigma_t^2 (1 - \frac{\text{SNR}_T}{\text{SNR}_t})
    \end{aligned}
\end{equation}
where $\sigma_0^2$ and $\sigma_T^2$ refer to the variance of the molecular graph $\mathbf{g}$ and the pharmacophore graph $\mathbf{\Gamma}$, respectively. 
The data variance at $t$-th step is configured as $\sigma_t = t$. The choice of $\sigma_0$ and $\sigma_t$ is discussed above in \cref{sec:bridge_para}. 

\subsection{Time Discretization}
\label{sec:time_disc}

Following EDM \cite{karras2022elucidating}, we discretize the time steps according to 
\begin{equation}
    \label{eq:time_disc}
        t_i = ( \sigma_{max}^{\frac{1}{\rho}} + \frac{N-i}{N-1} ( \sigma_{min}^{\frac{1}{\rho}} - \sigma_{max}^{\frac{1}{\rho}} ) )^{\rho}, \quad \mathrm{when} \quad i > 0 
\end{equation}
where $\rho$ is the parameter to control that shorter steps are taken near $\sigma_{min}$, and is set to $\rho = 7$ by default. We set $t_0 = 0$ to ensure the output is the reconstructed data sample. Empirically, for VE bridge, we set $\sigma_{min} = 0.02$ and $\sigma_{max} = 80$. And for VP bridge, we set $\sigma_{min} = 0.0001$ and $\sigma_{max} = 1$. 

\subsection{$SE(3)$-equivariant Network}
\label{sec:se3_model}

As illustrated in Eq. \eqref{eq:score_model} and \eqref{eq:edm}, the target of our score matching model is finalized to predict the mixture of signal and noise with $F_{\theta}$, which is, in our case, to predict the molecular point cloud with certain noise. A molecular point cloud is defined as $\mathbf{g} = (\mathbf{x}_{mol}, \mathbf{h}_{mol})$, where $\mathbf{x}_{mol} \in \mathbb{R}^{N \times 3}$ represents the coordinates of $N$ atoms in the 3D Euclidean space and $\mathbf{h}_{mol} \in \mathbb{R}^{N \times M}$ represents the node feature matrix that is composed of the one-hot encodings of $M$ atom types. A pharmacophore point cloud is defined as $\mathbf{\Gamma} = (\mathbf{x}_{pp}, \mathbf{h}_{pp})$. Similarly, $\mathbf{x}_{pp} \in \mathbb{R}^{N \times 3}$ records the center coordinates of the pharmacophore that each atom belongs to, and $\mathbf{h}_{pp} \in \mathbb{R}^{N \times K}$ is the feature matrix of the one-hot encodings of $K$ pharmacophore types. For data consistency, we perform zero padding to the feature matrix with less number of features, i.e., $\mathbf{h}_{mol}, \mathbf{h}_{pp} \in \mathbb{R}^{N \times F}$ with $F = \text{max}(M, K)$. Thus, the behavior of $F_{\theta}$ at time step $t$ is formulated as 
\begin{equation}
    \label{eq:f_theta}
    \begin{aligned}
        (\Tilde{\mathbf{x}}, \Tilde{\mathbf{h}}) &= F_{\theta} (c_{in}(t) \mathbf{G}_t, c_{noise}(t)) \\
        &= F_{\theta} (c_{in}(t) (\mathbf{x}_{t}, \mathbf{h}_{t}), c_{noise}(t))
    \end{aligned}
\end{equation}

\paragraph{Equivariance}
To manipulate 3D objects, it is crucial that the model $F_{\theta}$ satisfies $SE(3)$-equivariance. This ensures, for the coordinates $\mathbf{x}$, transformations such as rotations and translations of the input graph should result in equivariant output transformations, whereas the node features $\mathbf{h}$ should be invariant to any transformations. Assuming any transformation given by $\mathbf{Rx} + \mathbf{t}$ is undertaken to the input coordinates, an equivariant model should have
\begin{equation}
    \label{equivar}
    (\mathbf{R}\Tilde{\mathbf{x}} + \mathbf{t}, \Tilde{\mathbf{h}}) = F_{\theta} \left( c_{in}(t) (\mathbf{Rx}+ \mathbf{t}, \mathbf{h}), c_{noise}(t) \right)
\end{equation}
where $\mathbf{R} \in \mathbb{R}^{3 \times 3}$ is an orthogonal matrix defining the rotation and $\mathbf{t} \in \mathbb{R}^3$ defines the translation. 

To preserve $SE(3)$-equivariance, we build our model with EGNN \cite{satorras2021n}. To condition the updating of the molecular point cloud with the desired pharmacophore models, we concatenate the coordinates and features of the molecular point cloud at time $t$ and the initial pharmacophore point cloud to form a combined graph $\mathbf{G}_t = (\mathbf{x}_t, \mathbf{h}_t)$, where $\mathbf{x}_t \in \mathbb{R}^{2N \times 3}$ and $\mathbf{h}_t \in \mathbb{R}^{2N \times F}$. A mask $\mathcal{M}^{mol} \in \mathbb{R}^{2N}$ is applied to only update the nodes belonging to the molecular point cloud. 

\begin{figure}[ht]
    \centering
    \includegraphics[width=0.5\linewidth]{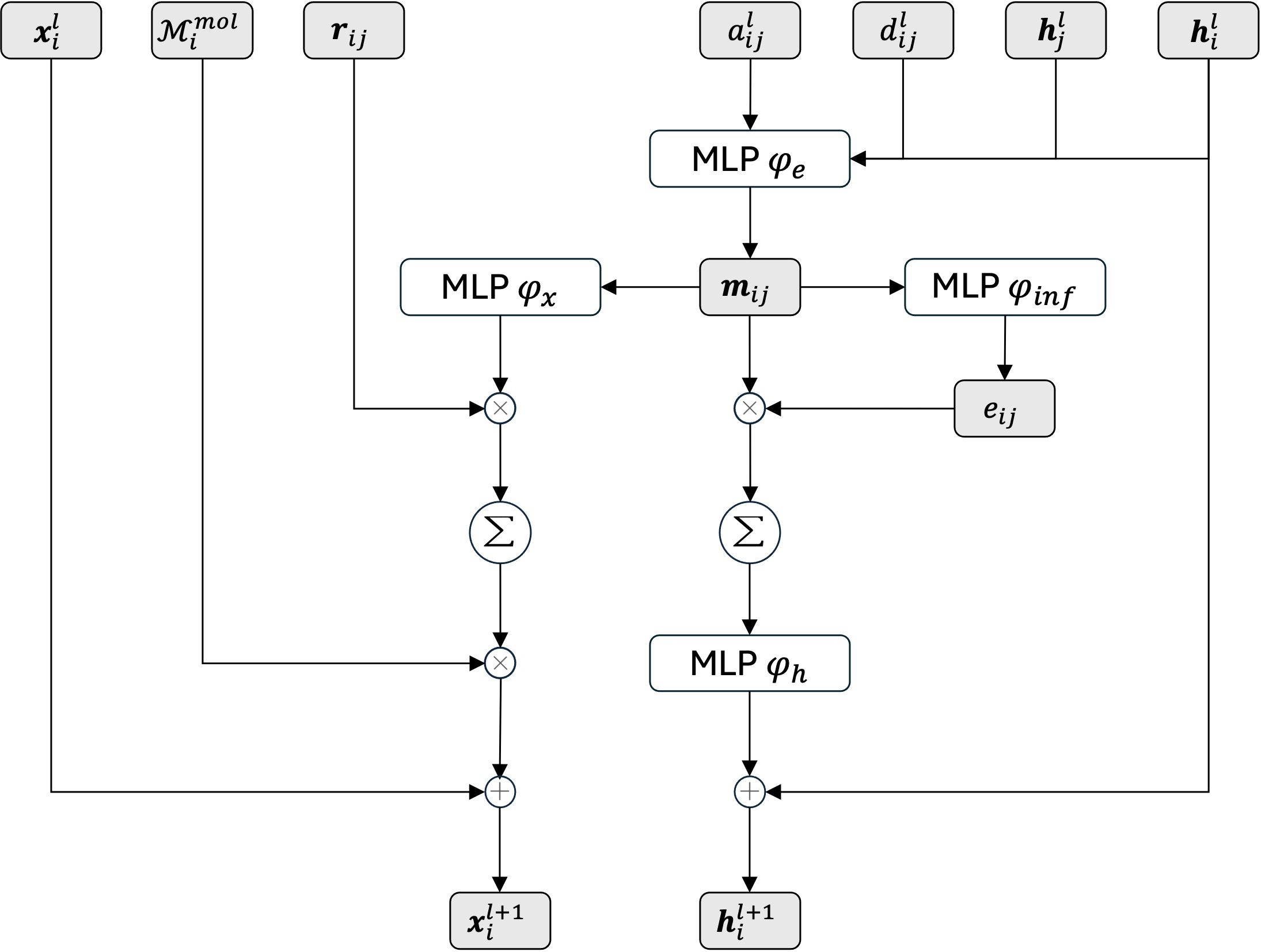}
    \caption{Flowchart of the EGCL module. $\otimes$ and $\oplus$ denote element-wise multiplication and addition, respectively. $\sum$ represents summation over all neighbors.}
    \label{fig:egcl}
\end{figure}

The Equivariant Graph Convolutional Layer (EGCL) that we used for feature updating is illustrated in Figure \ref{fig:egcl}. At $l$-th layer, the updating procedure is given by
\begin{equation}
    \label{eq:egnn}
    \begin{aligned}
        &\mathbf{m}_{ij} = \phi_e(\mathbf{h}_i^l, \mathbf{h}_j^l, d_{ij}^2, a_{ij}), \\
        &\mathbf{h}_i^{l+1} = \mathbf{h}_i^l + \phi_h(\mathbf{h}_i^l, \sum_{j \in \mathcal{N}(i)} e_{ij} \mathbf{m}_{ij} ), \\
        &\mathbf{x}_i^{l+1} = \mathbf{x}_i^l + \sum_{j \in \mathcal{N}(i)} \mathbf{r}_{ij} \phi_x(\mathbf{m}_{ij}) \cdot \mathcal{M}^{mol}_i
    \end{aligned}
\end{equation}
where $i$ and $j$ refer to the node index. $d_{ij} = \| \mathbf{x}_i - \mathbf{x}_j \|$ is the euclidean distance between node $i$ and $j$, and $\mathbf{r}_{ij} = \mathbf{x}_i^l - \mathbf{x}_j^l$ is the vector difference between node $i$ and $j$. $a_{ij}$ is the edge attribute indicating the interaction is between atoms, pharmacophore nodes, or an atom and a pharmacophore node. $e_{ij}$ is the soft estimation of the existence of edge between node $i$ and $j$, which is computed by $e_{ij} = \phi_{inf}(\mathbf{m}_{ij})$. Initially, the time step $t_i$ is incorporated with $\mathbf{h}_i^0$ via $\mathbf{h}_i^0 = \phi_{time} (\mathbf{h}_i^0, t_i)$.
$\phi_{time}$, $\phi_e$, $\phi_h$, $\phi_x$ and $\phi_{inf}$ are all trainable MLPs. 
As demonstrated by TargetDiff \cite{guan20233d}, EGCL governs the rotational equivariance. To fulfill translational equivariance as well, we must shift the center of the pharmacophore point cloud to zero. 

\subsection{Data Preprocessing}
\label{sec:data_preproc}

We extracted the pharmacophore and atom features from the ligands to form the node feature matrix. Typical pharmacophore features include Hydrophobe, Aromatic ring, Cation, Anion, Hydrogen Bond donor, Hydrogen Bond acceptor and Halogen. We considered atoms that is not in any pharmacophore belonging to an additional class, i.e., Linker. The pharmacophore feature matrix $\mathbf{h}_{pp}$ is thus composed of the one-hot encoding of each atom's pharmacophore cluster membership. In case of overlapping, we considered the atom belonging to the larger pharmacophore cluster with more atoms. 
For the atom features $\mathbf{h}_{mol}$, we considered two modes, i.e., basic mode and aromatic mode. In the basic mode, $\mathbf{h}_{mol}$ consists of the one-hot encoding of the atom type, including C, N, O, F, P, S and Cl. In the aromatic mode, $\mathbf{h}_{mol}$ consists of the one-hot encoding of the combination of the atom type and the aromatic property, which denotes if an atom appears in an aromatic ring. In this case, atom features contain C, C.ar, N, N.ar, O, O.ar, F, P, P.ar, S, S.ar and Cl. 

The position matrix $\mathbf{x}_{mol}$ is constructed using the coordinates of the atoms and $\mathbf{x}_{pp}$ is initialized as the center of pharmacophores adding a Gaussian noise with a small deviation to avoid having exactly the same destination for different nodes in the same pharmacophore. For atoms not belonging to any pharmacophore, we apply HDBSCAN to cluster the nodes that are spatially close and set $\mathbf{x}_{pp}$ as the center of the cluster. 

\section{Additional Results}

\subsection{Ablation Studies on SA and QED Evaluations}
\label{sec:ablation}

\begin{figure}[ht]
    \centering
    \includegraphics[width=0.9\textwidth]{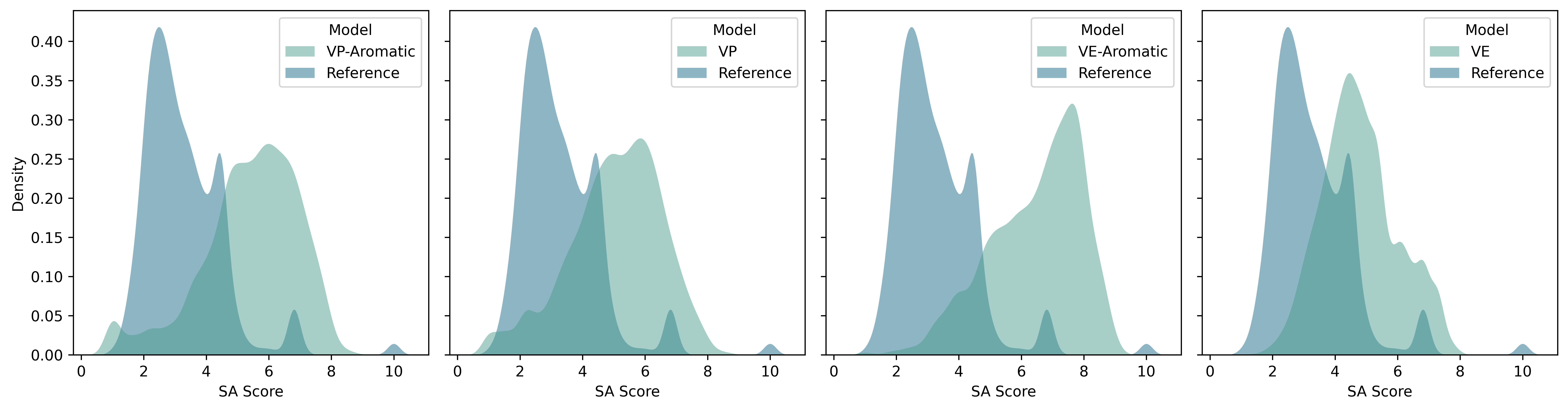}
    \caption{Ablation studies on SA analysis. PharmacoBridge with the VP design achieved similar SA scores with or without aromatic features. PharmacoBridge with the VE design without aromatic features achieved the most similar SA score distribution compared to the dataset distribution, but incorporating aromatic features significantly lower the synthetic accessibility. }
    \label{fig:sa_ablation}
\end{figure}

\begin{figure}[ht]
    \centering
    \includegraphics[width=0.9\textwidth]{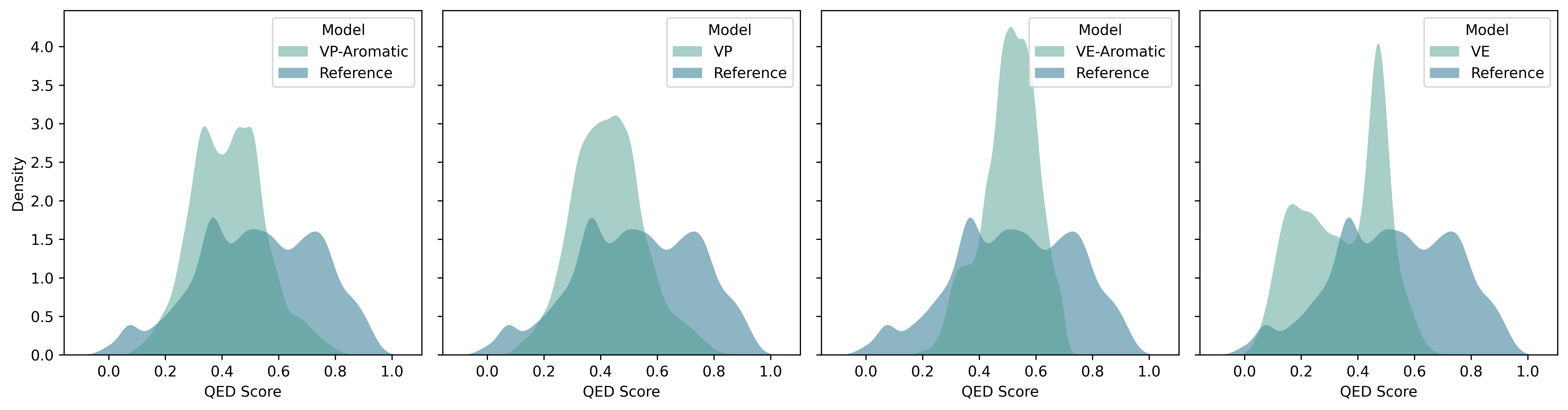}
    \caption{Ablation studies on QED analysis. Similar to SA analysis, PharmacoBridge with the VP design achieved comparative QED scores with or without aromatic features. PharmacoBridge with the VE design with aromatic features achieved the highest range of QED scores, but the QED scores dropped significantly without aromatic features. }
    \label{fig:qed_ablation}
\end{figure}

In this section, we investigate the effects of equipping PharmacoBridge with VP or VE schedules and the incorporation of aromatic features in SA and QED evaluations. 
Figure \ref{fig:sa_ablation} and Figure \ref{fig:qed_ablation} present the SA and QED score distributions of the ablation studies of our method. VP-based PharmacoBridge always achieved stable SA and QED score distributions with or without aromatic features. VE-based PharmacoBridge achieved lower SA scores (i.e., better synthetic accessibility) when aromatic features were absent, but obtained higher QED scores when aromatic features were present. Considering other ablation study results shown in Table \ref{tab:uncond}, we chose VP-based PharmacoBridge with aromatic features to continue subsequent conditional generation experiments. 

\subsection{Generated Samples}
\label{sec:samples}

\begin{figure}[ht]
    \centering
    \includegraphics[width=\textwidth]{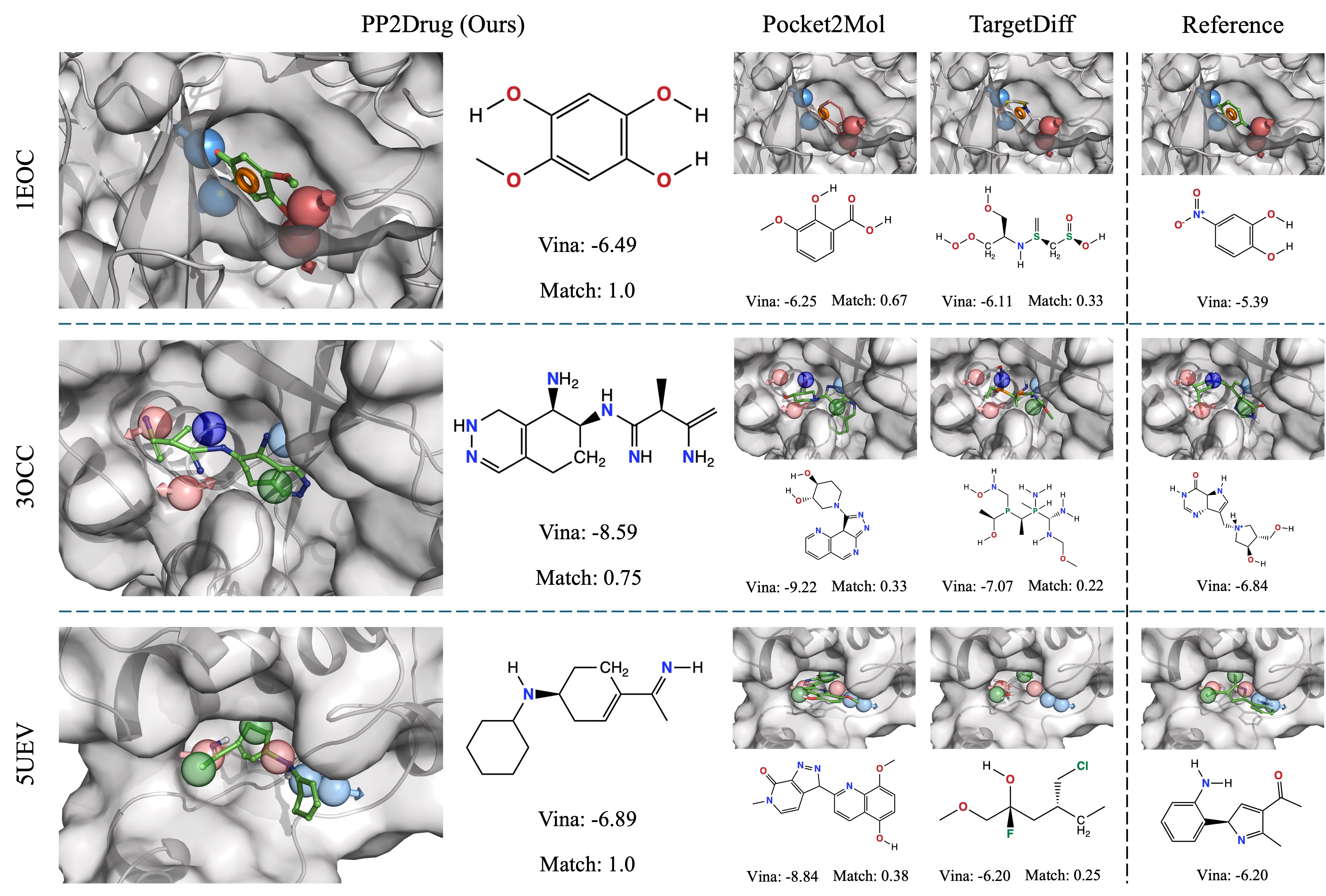}
    \caption{Molecules generated with the pharmacophore models of ligands from PDB structure 5JOY, 4CPI, 1DOD. Docked binding complexes and chemical structures, together with the Vina and CNN scores of both generated and original molecules are shown.}
    \label{fig:samples}
\end{figure}

Figure \ref{fig:samples} shows the reference molecules, and the molecules generated by PharmacoBridge and other baselines using the pharmacophores identified in the binding complex structures of 1EOC, 3OCC and 5UEV, including both 3D docked binding complexes and 2D chemical structures. 


\end{document}